    \DeclareMathOperator{\cov}{cov}
    \DeclareMathOperator{\E}{E}
    \DeclareMathOperator{\ent}{H}
    \DeclareMathOperator{\FE}{FE} 
    \DeclareMathOperator{\loglik}{LL} 
    \DeclareMathOperator{\LL}{LL}
    \DeclareMathOperator{\med}{median} 
    \DeclareMathOperator{\pa}{pa}
    \DeclareMathOperator{\Scope}{scope}
    \DeclareMathOperator{\tr}{tr}
    \DeclareMathOperator{\var}{var}
    \DeclareMathOperator{\mtov}{vec}
\theoremstyle{definition}
\newtheorem{definition}{\textit{Definition}}[section]
\newtheorem{theorem}{Theorem}
\definecolor{changecolor}{RGB}{192,64,0}
\begin{document}

\title{Adapting cluster graphs for inference of continuous trait evolution on phylogenetic
networks}

\author{
Benjamin Teo and Cécile Ané
\thanks{B. \!Teo is with the Department of Statistics, University of Wisconsin-Madison,
Madison, WI, USA}
\thanks{C. \!Ané is with the Departments of Statistics and Botany, University of
Wisconsin-Madison, Madison, WI, USA.}
}

\markboth{}%
{Shell \MakeLowercase{\textit{et al.}}}

\IEEEpubid{}

\maketitle

\begin{abstract}
Dynamic programming approaches have long been applied to fit models of
univariate and multivariate trait evolution on phylogenetic trees
for discrete and continuous traits, and more recently adapted to phylogenetic
networks with reticulation.
We previously showed that various trait evolution models on a network can be readily cast as
probabilistic graphical models, so that likelihood-based
estimation can proceed
efficiently via belief propagation on an associated clique tree.
Even so, exact likelihood inference can grow computationally prohibitive for large complex networks.
Loopy belief propagation can similarly be applied to these settings, using non-tree cluster
graphs to optimize a factored energy approximation to the log-likelihood, and may provide a
more practical trade-off between estimation accuracy and runtime.
However, the influence of cluster graph structure on this trade-off is not precisely
understood.
We conduct a simulation study using the Julia package \texttt{PhyloGaussianBeliefProp} to
investigate how varying maximum cluster size affects this trade-off for Gaussian trait
evolution models on networks.
We discuss recommended choices for maximum cluster size, and prove the equivalence of
likelihood-based and factored-energy-based parameter estimates for the homogeneous Brownian
motion model.
\end{abstract}

\begin{IEEEkeywords}
continuous trait, linear Gaussian, admixture graph, belief propagation, cluster graph,
approximate inference
\end{IEEEkeywords}

\section{Motivation}
\IEEEPARstart{L}{oopy} belief propagation (LBP) provides a means of parameter estimation by
approximate inference \cite[Sec.~20.5.1]{koller2009probabilistic}.
LBP extends belief propagation (BP) in an appealing way that retains BP's simple formulation
and intuitive message-passing rules \cite[Sec.~11.3.2]{koller2009probabilistic}.
Although LBP faces non-convergence issues and approximation error,
especially when the ``simplest'' cluster graphs are used \cite{malioutov2006walk,kschischang2001:factor},
its performance can seemingly be improved by constructing alternative cluster graphs
\cite{kamper2019:GaBP-m,teo2024leveraging}.
However, such alternatives have received less attention, in part because of limited
automatic construction procedures \cite{streicher2017graph} and because the trade-offs
between different cluster graphs are difficult to characterize precisely.

We take up the challenge of systematically investigating how cluster graph structure affects
the performance of LBP for fitting Gaussian trait models on phylogenetic networks
\cite{mitov2020fast, teo2024leveraging}.
For the scope of this study, we focus on maximum likelihood estimation for the Brownian
motion model \cite[Ch.~3,5]{harmon2019} on three networks of varying topological
complexity, taken from the admixture graph and ancestral recombination
graph literature \cite{2020Lipson,lewanski2024:arg}.
We use the join-graph structuring algorithm to generate cluster
graphs of varying maximum cluster size $k$ \cite{mateescu2010join}, and address the problem of choosing $k$ to strike
a good balance between computational cost and estimation accuracy.
We run LBP with a fixed strategy for scheduling messages and initializing cluster graph beliefs
(Sec.~\ref{ch4:sched&reg}), though other choices could be explored to improve convergence.

Our objective is to assess the potential of LBP as a tool for more scalable, approximate
inference of Gaussian trait models on evolutionary networks,
which can be valuable when
parameter estimation requires iterative numerical optimization
\cite{Pybus2012,2023Bartoszek-modelselection-mvOU}.
We hope that our findings initiate or encourage the design of new cluster graphs, message
schedules, and belief initializations that better suit the phylogenetic context.

\section{Background}
\noindent
This section provides a brief overview of the key concepts involved for using LBP to fit
Gaussian trait models on phylogenetic networks.
For a more detailed exposition, see \cite{teo2024leveraging}.

\subsection{Trait evolution on phylogenetic networks}\label{ch4:traitevo}
\noindent
A phylogeny is a graph that describes the shared genealogy for a set of entities, typically
but not necessarily biological (e.g., species \cite{rose2021} or languages \cite{sagart2019dated}).
Nodes represent these entities and their common ancestors, while edges represent lineages
(Fig.~\ref{fig:net}).
Rooting the phylogeny by specifying an oldest node fixes the direction of
time and evolution along each edge.
As time cannot go backwards, a rooted phylogeny cannot have any directed cycle.

\begin{definition}[Rooted phylogenetic network {\cite[Ex.~2]{teo2024leveraging}}]
    A \emph{rooted phylogenetic network} is a connected, directed acyclic graph (DAG)
    $N=(V,E)$ with a single \emph{root} (a node with no parents) and taxon-labeled leaves.
    Nodes with at most one parent are called \emph{tree nodes}, and their in-edges are called
    \emph{tree edges}.
    Nodes with multiple parents represent populations with mixed ancestry. They are called
    \emph{hybrid nodes}, and their in-edges are called \emph{hybrid edges}.
    Each hybrid edge $e=(u,h)$ is assigned an \emph{inheritance weight} $\gamma(e)>0$ that
    represents the proportion of the genome in hybrid $h$ that was inherited from its
    parent $u$.
    The inheritance weights for each hybrid node must sum to 1.
\end{definition}

\newpage
Trait evolution is widely modeled over a rooted phylogeny as a forwards-in-time stochastic
process that progresses along the graph edges, culminating in the observed traits at
the leaves (Fig.~\ref{fig:traitevol}).
Evolutionary changes along separate edges are typically assumed
independent given their start states \cite{steel2016:phylogeny}.
If not, such as to model interacting populations or the variability of gene
histories
\cite{2017ManceauLambertMorlon-interactinglineages,2021Rabier-snappnet},
trait evolution can be modeled over a
supergraph of the phylogeny \cite{teo2024leveraging}.
Either way, trait evolution models can be described by a graphical model with
conditional distributions local to each edge, which are induced by the stochastic process
\cite{Mitov2019,2021Bastide-HMC,bastide2023cauchy,teo2024leveraging}.

\subsection{Gaussian trait models}\label{ch4:gaussianmodels}
\noindent
Gaussian trait models remain a workhorse in the study of continuous trait evolution
\cite{HoAne2014_phylolm}, 
and are often characterized by linear Gaussian models along each edge \cite{mitov2020fast}.

\begin{definition}[Linear Gaussian model {\cite[Sec.~3(a)]{teo2024leveraging}}]
    \label{ch4:def:lineargaussian}
    A \emph{linear Gaussian} model on a directed tree edge $(u,v)$ 
    assumes that the $p$-dimensional trait
    $X_v$ at node $v$ 
    has the following conditional distribution, given the trait $X_u$
    of its single parent $u$: 
    \begin{equation}\label{eq:lineargaussian-1edge}
        X_v\mid X_u\sim\mathcal{N}(\bm{q}_v X_u+\omega_v, \bm{V}_v)
    \end{equation}
    where the $p\times p$ actualization matrix $\bm{q}_v$, length-$p$ trend vector $\omega_v$,
    and $p\times p$ covariance matrix $\bm{V}_v$ do not depend on $X_u$.

    \noindent
    If $v$ has multiple parents $\pa(v)=\{u_1,\dots,u_m\}$, whose stacked traits
    $\mtov(\begin{bmatrix}X_{u_1}\dots X_{u_m}\end{bmatrix})$
    we denote by $X_{\pa(v)}$,
    then the \emph{node family} $\{v\}\cup\pa(v)$
    has a linear Gaussian model if
    \begin{equation}\label{eq:lineargaussian-1hybrid}
        X_v\mid X_{\pa(v)}\sim\mathcal{N}(\bm{q}_v X_{\pa(v)}+\omega_v, \bm{V}_v)
    \end{equation}
    with $\bm{q}_v$, $\omega_v$, and $\bm{V}_v$ independent of $X_{\pa(v)}$, where
    $\bm{q}_v$ is now of size $p\times (mp)$.
    A DAG for which every node family has a linear Gaussian model is known as a
    \emph{linear Gaussian network}.
\end{definition}

If $v$ has multiple parent edges $e_k=(u_k,v)$,
it is reasonable to assume that $X_v$ is a function of
$(X_{\underline{e_k}})_{k=1\ldots,m}$, where $X_{\underline{e_k}}$
denotes the trait at the end of edge $e_k$.
For example, a weighted average (Fig.~\ref{fig:traitevol}) is
a reasonable approximation for continuous traits
that are influenced by many genes \cite{2018Bastide-pcm-net}.

\begin{figure*}[!t]
    \centering
    \begin{tabular}{cc}
         \subfloat[]{\includegraphics[]{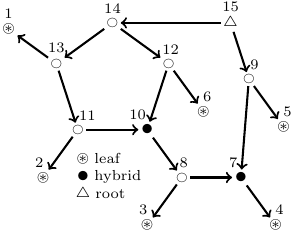} \label{fig:net}
    } &
    \subfloat[]{\includegraphics[]{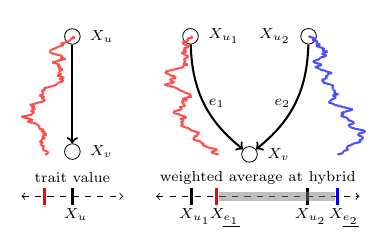} \label{fig:traitevol}
    } \\
    \subfloat[]{\includegraphics[]{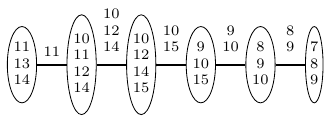} \label{fig:cliquetree}
    } &
    \subfloat[]{\includegraphics[]{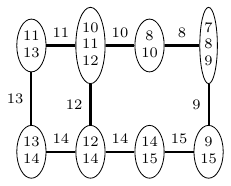} \label{fig:clustergraph}
    } \\
    \end{tabular}
    \caption{(a) Rooted phylogenetic network.
    (b) Example realization of a univariate trait evolution,
    along one edge $uv$ (left) or
    both parent edges of a hybrid node $v$ (right).
    The gray region shows the range of possible trait values at the
    hybrid node under a weighted-average model.
    (c) Clique tree for any graphical model on the network in (a),
    with numbers refering to nodes in (a). Its maximum cluster size is $k^*=4$.
    (d) Loopy cluster graph for the network in (a),
    with maximum cluster size $k=3$ .}
    \label{fig:net-traitevol-clustergraph}
\end{figure*}

\begin{definition}[Weighted-average merging rule {\cite[Eq.~3.3]{teo2024leveraging}}]
    \label{ch4:def:mergingrule}
    For a hybrid node $v$ with parent edges $e_1$, \dots, $e_m$ and inheritance
    weights $\gamma(e_1)$, \dots, $\gamma(e_m)$, the \emph{weighted-average model} assumes that
    \begin{equation}\label{eq:mergingrule-weightedaverage}
        X_v = \sum_{e_k \text{ parent of } v}\gamma(e_k) X_{\underline{e_k}}\;.
    \end{equation}
\end{definition}
If each $e_k$ has a linear Gaussian model \eqref{eq:lineargaussian-1edge},
then the weighted-average model \eqref{eq:mergingrule-weightedaverage} implies a linear
Gaussian model \eqref{eq:lineargaussian-1hybrid} for $\{v\}\cup\pa(v)$.

\subsection{Belief propagation on cluster graphs}\label{ch4:beliefpropagation}
\noindent
Belief propagation is a technique for efficiently computing the marginals of a probability
density expressed as a graphical model. It can be leveraged to compute the likelihood for
parameter optimization.
We focus on directed graphical models.

\begin{definition}[Directed graphical model {\cite[Sec.~4(a)]{teo2024leveraging}}]
    \label{ch4:def:directedgm}
    Let $p_\theta$ be the joint probability density for a set of random vectors
    $\{X_v,v\in V\}$, with parameters $\theta\in\Theta$.
    A \emph{directed graphical model} for $p_\theta$ consists of a DAG $G=(V,E)$ and a set
    $\Phi\coloneq\{\phi_v,v\in V\}$ of non-negative functions such that:
    \begin{enumerate}
        \item $p_\theta=\prod_{\phi_v\in\Phi}\phi_v$
        \item Each $\phi_v$ is proportional to
        the conditional density for $v$'s node family in $G$:
        $\phi_v\propto p_\theta(X_v\mid X_u,u\in\pa(v))$.
    \end{enumerate}
    We say that $p_\theta$ \emph{factorizes} over $G$ and refer to the $\phi_v$ as
    \emph{factors}.
    The set of $X_w$ that $\phi_v$ is defined over is referred to as its \emph{scope},
    denoted $\Scope(\phi_v)$. Finally, $\Scope(\Phi)\coloneq\cup_{\phi_v\in\Phi}\Scope(\phi_v)$.
\end{definition}
Gaussian trait models on a phylogenetic network can typically be cast as a graphical model
with $G=N$ and $\Phi$ consisting of linear Gaussian factors.
Belief propagation then proceeds by passing messages on a cluster graph constructed from the
graphical model.

\begin{definition}[Cluster graph {\cite[Def.~1]{teo2024leveraging}}]
    \label{ch4:def:clustergraph}
    A \emph{cluster graph} for a directed graphical model $(G,\Phi)$ is an undirected graph
    $\mathcal{U}=(\mathcal{V},\mathcal{E})$, whose vertices $\mathcal{C}_i\in\mathcal{V}$,
    $i\in[|\mathcal{V}|]$ are subsets of $\Scope(\Phi)$ and are called \emph{clusters}.
    Additionally, $\mathcal{U}$ must satisfy:
    \begin{enumerate}
        \item Each factor $\phi_v$ can be assigned to a cluster $\mathcal{C}_{\alpha(\phi_v)}$
        that contains its scope, i.e., $\Scope(\phi_v)\subseteq\mathcal{C}_{\alpha(\phi_v)}$;
        $\alpha$ denotes a suitable map from factors to cluster indices.
        \item Each edge $\{\mathcal{C}_i,\mathcal{C}_j\}\in\mathcal{E}$ is labeled with a
        non-empty \emph{separating set} (or \emph{sepset}) $\mathcal{S}_{i,j}\subseteq
        \mathcal{C}_i\cap\mathcal{C}_j$.
        Thus, only clusters with common elements can be joined.
        \item For each $X_v\in\Scope(\Phi)$, $(\mathcal{V}_{X_v}, \mathcal{E}_{X_v})$
        is a subtree of $\mathcal{U}$,
        where $\mathcal{V}_{X_v}\subseteq \mathcal{V}$ and
        $\mathcal{E}_{X_v}\subseteq\mathcal{E}$ are the sets of clusters and edges
        whose labels contain $X_v$.
    \end{enumerate}
    If $\mathcal{U}$ is a tree, then we call $\mathcal{U}$ a \emph{clique tree} and refer to
    its clusters as cliques (Fig.~\ref{fig:cliquetree}).
    For a clique tree, properties 2 and 3 imply that
    $\mathcal{S}_{i,j}=\mathcal{C}_i\cap\mathcal{C}_j$.
    If $\mathcal{U}$ contains one or more cycles, then we call $\mathcal{U}$ a
    \emph{loopy cluster graph} (Fig.~\ref{fig:clustergraph}).
\end{definition}

$\mathcal{U}$ acts as a data structure for ordering the sequence of operations and holding
intermediate computations used to build up various marginals of $p_\theta$.
When an edge is traversed, intermediate computations aggregated at the source cluster are
marginalized, and the result is passed to the target cluster for downstream computations.
The intermediate computations tracked at each cluster and edge are scalar functions that are
defined over their variables and called \emph{beliefs}, denoted $\beta_i(\mathcal{C}_i)$ or
$\mu_{i,j}(\mathcal{S}_{i,j})$.
Cluster beliefs are marginalized to produce \emph{messages}, which update neighbor cluster
beliefs by multiplication, while edge beliefs record the most recent messages passed.
Sending a message across an edge is called a \emph{belief update} (Alg.~\ref{ch4:alg:bp}),
while circulating messages throughout $\mathcal{U}$ is known as
\emph{belief propagation}, or \emph{loopy belief propagation} to emphasize the use of a
loopy cluster graph.
Beliefs are initialized as the product of factors assigned to the corresponding cluster
(Def.~\ref{ch4:def:clustergraph})
and are instantiated at the observed data.
That is, if $X_v=\mathrm{X}_v$ is observed,
then this value is fixed in all beliefs defined over $X_v$, and $X_v$ is removed from their
scopes.
Variables determined by model parameters (e.g., the root state
$X_\rho$ in a Brownian motion trait model) are similarly removed from beliefs' scopes.

\begin{algorithm}
    \caption{Belief update from $\mathcal{C}_i$ to $\mathcal{C}_j$
    \cite[Alg.~1]{teo2024leveraging}}
    \label{ch4:alg:bp}
    \begin{algorithmic}[1]
        \vspace{-.4em}
        \State
        \begin{minipage}[t]{0.45\linewidth}
            Compute message ($\widetilde{\mu}_{i\rightarrow j}$) from source belief
            ($\beta_i$):
        \end{minipage}
        \hfill
        \begin{minipage}[c]{0.5\linewidth}
            \begin{equation*}
                \widetilde{\mu}_{i\rightarrow j}\leftarrow\int\beta_i
                d(\mathcal{C}_i\setminus\mathcal{S}_{i,j})
            \end{equation*}
        \end{minipage}
        \State
        \begin{minipage}[t]{0.45\linewidth}
            Update target ($\beta_j$) and edge ($\mu_{i,j}$) beliefs:
        \end{minipage}
        \hspace{.3em}
        \begin{minipage}[c]{0.5\linewidth}
            \begin{equation*}
                \begin{split}
                    \beta_j^\text{new} &\leftarrow (\widetilde{\mu}_{i\rightarrow j}/\mu_{i,j})\beta_j \\
                    \mu_{i,j}^\text{new} &\leftarrow \widetilde{\mu}_{i\rightarrow j}
                \end{split}
            \end{equation*}
        \end{minipage}
    \end{algorithmic}
\end{algorithm}

If $\Phi$ consists of linear Gaussian factors, then beliefs can be compactly parametrized
and belief updates can be expressed as updates to these parameters
\cite[Sec.~4(c)]{teo2024leveraging}.
This is an instance of \emph{Gaussian belief propagation}, where the graphical model (not
necessarily directed) represents a Gaussian distribution.

\subsection{Calibration and the factored energy}\label{ch4:factoredenergy}
\noindent
$\mathcal{U}$ is said to be \emph{calibrated} if all its beliefs are
\emph{marginally consistent}, or equivalently, if for any pair of neighbors $\mathcal{C}_i$
and $\mathcal{C}_j$:
$\int\beta_i d(\mathcal{C}_i\setminus\mathcal{S}_{i,j})\propto\mu_{i,j}\propto
\int\beta_jd(\mathcal{C}_j\setminus\mathcal{S}_{i,j})$.
If so, belief updates do not change the 
beliefs of a calibrated cluster graph.
For a clique tree, calibration is guaranteed in two traversals: a postorder traversal towards
any designated root clique, followed by a preorder traversal outwards from the root clique.
For a loopy cluster graph, calibration requires multiple traversals, and may not be
guaranteed.

Since neighbor clusters must exchange messages for their beliefs to attain marginal
consistency, a sound \emph{message schedule} (i.e., a sequence of messages to be passed)
should traverse each edge, in both directions.
Since multiple exchanges between neighbor clusters may be needed, a \emph{valid} schedule
\cite{wainwright2003tree} should additionally repeat each message infinitely often.
For example, cycling through a sound finite schedule satisfies this requirement.

Beliefs can be viewed as unnormalized estimates of the conditional distribution of cluster
or edge variables given the observed data $\mathrm{Y}$:
$\beta_i\propto\widehat{p}_\theta(\mathcal{C}_i\mid\mathrm{Y})$ and
$\mu_{i,j}\propto\widehat{p}_\theta(\mathcal{S}_{i,j}\mid\mathrm{Y})$.
Beliefs are exact ($\beta_i\propto p_\theta(\mathcal{C}_i\mid\mathrm{Y})$
and $\mu_{i,j}\propto p_\theta(\mathcal{S}_{i,j}\mid\mathrm{Y})$)
on calibrated clique trees,
but approximate on calibrated loopy cluster graphs.
Upon calibration, the factored energy can be computed to approximate the log-likelihood.

\begin{definition}[Factored energy {\cite[Eq.~5.1]{teo2024leveraging}}]
The \emph{factored energy} of a cluster graph $\mathcal{U}=(\mathcal{V},\mathcal{E})$
with calibrated beliefs $q^*=\{\beta^*_i,\mu^*_{i,j}\mid\mathcal{C}_i\in\mathcal{V},
\{\mathcal{C}_i,\mathcal{C}_j\}\in\mathcal{E}\}$ is defined as
\begin{equation}\label{ch4:eq:fe}
    \begin{gathered}
        \widetilde{F}_\mathcal{U}(q^*) \coloneq \\
        \sum_{\mathcal{C}_i\in\mathcal{V}}\E_{\beta_i^*}(\log\psi_i|_\mathrm{Y})
        + \sum_{\mathcal{C}_i\in\mathcal{V}}\ent(\beta_i^*) -
        \hspace{-.8em}\sum_{\{\mathcal{C}_i,\mathcal{C}_j\}\in\mathcal{E}}
        \hspace{-.8em}\ent(\mu_{i,j}^*)
    \end{gathered}
\end{equation}
where $\psi_i=\prod_{\phi_v\in\Phi,\alpha(\phi_v)=i}\phi_v$ combines the factors assigned
to cluster $i$, $(\cdot)|_\mathrm{Y}$ denotes evaluation at the observed data,
$\E_{\beta_i^*}(\cdot)$ is the expectation with respect to $\beta_i^*$, and $\ent(\cdot)$
is the entropy.

$\widetilde{F}_\mathcal{U}$ resembles the evidence lower bound (ELBO) to the
log-likelihood $\LL=\log(\int p_\theta |_\mathrm{Y}dX)$, where $X$ denotes the
unobserved variables in $\Scope(\Phi)$ \cite{koller2009probabilistic,blei2017variational}.
If $\mathcal{U}$ is a clique tree, then $\widetilde{F}_{\mathcal{U}}=\LL$
exactly\footnote{For a calibrated clique tree, the likelihood can also be obtained by
fully integrating any of the beliefs, e.g., $\int\beta^*_i d\mathcal{C}_i=
\int\mu^*_{i,j}d\mathcal{S}_{i,j}=\int p_\theta|_\mathrm{Y}dX$.\label{footnote:integrateroot}}.
\end{definition}

In the factored energy \eqref{ch4:eq:fe}, each term
is an integral of lower dimension
and cheaper to compute than the likelihood,
$\int p_\theta|_\mathrm{Y}dX$.
The computational cost of each such integral is parametrized by the size of the associated
cluster $|\mathcal{C}|$ or sepset $|\mathcal{S}|$.
Thus, the cost of belief propagation and of evaluating the factored energy are parametrized
by the maximum cluster size $k=\max_{\mathcal{C}\in\mathcal{V}}|\mathcal{C}|$.
A larger $k$ is associated with higher cost.
For a given graphical model,
$k^*$ denotes the minimum $k$ among all possible clique trees for this model.
While the exact LL and calibrated beliefs can be computed with a clique tree
(with $k\ge k^*$ necessarily),
a loopy cluster graph permits $k<k^*$ with smaller clusters and hence cheaper
messages.

\section{Methods}
\noindent
This section describes simulations, using our Julia package \texttt{PhyloGaussianBeliefProp},
that investigate the use of LBP on cluster graphs to fit a Brownian motion (BM) trait model
on a network.
We first compare the runtime and accuracy of using the factored energy to approximate the
log-likelihood, for different maximum cluster sizes $k$.
We then assess the accuracy of parameter estimation by maximizing the factored energy, using
$k$ chosen from the previous step.

\subsection{Maximum factored energy estimation}\label{ch4:methods:mfe}
\noindent
The error for using the factored energy (FE) of a calibrated cluster graph to approximate the
log-likelihood (LL) intuitively improves with larger clusters and sepsets, and a more
treelike topology, but is otherwise not well understood as a function of cluster graph
structure.
Whether or not maximum factored energy (MFE) estimation is a reasonable proxy for maximum
likelihood (ML) estimation depends on the size of the error across parameter space and more
importantly, the extent to which the shapes of the FE and LL surfaces are similar.
Further, whether or not MFE estimation is practical depends on the extent to which cheaper
messages reduce the computational effort needed to obtain FE values.
Though individual messages can be cheaper on a loopy cluster graph than on a clique tree,
calibration requires more traversals.

We use an illustrative case to better understand if and how MFE estimation can be
applied.
We compared MFE and ML estimation for fitting a BM model of $p$-dimensional trait evolution
with variance rate $\bm{\Sigma}\in\mathbb{R}^{p\times p}$ on a phylogenetic network $N=(V,E)$
with $n$ tips and fixed root state $X_\rho=\mu\in\mathbb{R}^p$.
The parameters to be estimated are $\bm{\Sigma}$ and $\mu$.
We focus on this model because it has closed-form ML estimates \eqref{ch4:eq:mle}.

Let $Y_i\in\mathbb{R}^p$ be the unobserved trait vector at leaf $i$, $\bm{Y}=\begin{bmatrix}
Y_1\cdots Y_n\end{bmatrix}\in\mathbb{R}^{p\times n}$, and $Y=\mtov(\bm{Y})\in\mathbb{R}^{np}$.
$Y\sim\mathcal{N}(1_n\otimes\mu,\mathbf{P}_y\otimes\bm{\Sigma})$, where $\mathbf{P}_y\in
\mathbb{R}^{n\times n}$ is the shared-path matrix for the leaves
\cite[Eq.~1]{2018Bastide-pcm-net}, and $1_n$ is the $n$-vector of 1s.
Alternatively, $\bm{Y}\sim\mathcal{MN}_{p,n}(\mu 1_n^\top,\bm{\Sigma},\mathbf{P}_y)$
in the matrix-normal formulation.
The closed-form expressions for the ML estimates and LL of this model (given
observed data $\bm{Y}=\mathbf{Y}$) \cite{revellharmon2008,glanz2018expectation} obviate the
use of BP and loopy BP, to the extent that $\mathbf{P}_y$ can be accurately inverted:
\begin{equation}\label{ch4:eq:mle}
    \begin{aligned}
    \widehat{\mu}_\text{ML} &= \mathbf{Y}\mathbf{P}^{-1}_y1_n/\lVert 1_n
        \rVert_{\mathbf{P}^{-1}_y}^2 \\
    \widehat{\bm{\Sigma}}_\text{ML} &= \lVert\mathbf{Y}^\top-1_n \widehat{\mu}_\text{ML}^\top
        \rVert^2_{\mathbf{P}^{-1}_y}/n \\
    \loglik(\widehat{\mu}_\text{ML},\widehat{\bm{\Sigma}}_\text{ML}) &= -\frac{np}{2}(1+\log2\pi)
        -\frac{p}{2}\log\lvert\mathbf{P}_y\rvert \\
        &\hspace{1.1em} -\frac{n}{2}\log\lvert\widehat{\bm{\Sigma}}_\text{ML}\rvert
    \end{aligned}
\end{equation}
where $\lVert\cdot\rVert_{\mathbf{P}^{-1}_y}^2$ denotes the function $(\cdot)^\top
\mathbf{P}^{-1}_y(\cdot)$ when applied to an input matrix of conformable dimensions.
These equations allow us to conveniently assess the accuracy of the MFE and the
corresponding parameter estimates $\widehat{\mu}_\text{MFE}$,
$\widehat{\bm{\Sigma}}_\text{MFE}$.

\subsection{Spanning trees schedule and regularizing beliefs}\label{ch4:sched&reg}
\noindent
For LBP, we based our message schedule on a small, though not necessarily minimal, set of
spanning trees of the cluster graph, that together cover all its edges and hence all possible
messages \cite[Sec.~5(a)]{teo2024leveraging}.
This set is chosen by iteratively applying Kruskal's algorithm for finding a minimum-weight
spanning tree \cite{cormen2009introduction}, each time incrementing the weights of edges
that are selected for the current spanning tree, and stopping when all edges of the cluster
graph have been used.
The schedule cycles through the spanning trees, and at each step traverses the current
spanning tree in postorder, then preorder.
We refer to each pass through this set of spanning trees as an \emph{iteration} of LBP.
This is a specific implementation of the tree-based reparametrization framework proposed in
\cite{wainwright2003tree}, except that each ``tree update'' does not necessarily calibrate
the current spanning tree (e.g., because some messages are defined over a smaller scope than
the intersection of the source and target clusters, i.e., $\mathcal{S}_{i,j}\subset
\mathcal{C}_i\cap\mathcal{C}_j$, or because some messages are infinite).

A known issue in LBP is that infinite messages
may arise regardless of the chosen schedule.
In step 1 of Algorithm~\ref{ch4:alg:bp}, message $\widetilde{\mu}_{i\rightarrow j}$ is
infinite
if integrating out the required variables from the source belief
evaluates to infinity, thus preventing update and calibration
of the target belief \cite[Sec.~7(a)]{teo2024leveraging}.
This issue is specific to beliefs over continuous variables, which can be
unnormalizable\footnote{The normalizability of a cluster's belief is sufficient but not
necessary for messages computed from that cluster to be finite.},
and is more prominent for directed graphical models since the factors are conditional
densities. Cluster beliefs initialized by multiplying conditional densities
and without any instantiation at the observed data
may be unnormalizable.
Infinite messages can sometimes be avoided by \emph{regularizing} the initial beliefs
while ensuring that the cluster graph still represents the same probability density
$p_\theta|_\mathrm{Y}$ \cite[Sec.~SM-E]{teo2024leveraging}.
For LBP, we applied the regularization strategy from \cite[Alg.~R4]{teo2024leveraging} so
that all cluster beliefs were normalizable before applying our spanning trees schedule.

\subsection{Simulations}

\subsubsection{Network preprocessing}

We used three networks of varying topological complexity, in terms of network level $\ell$
\cite{gambette2009structure} and moralized treewidth $k^*-1$
(Sec.~\ref{ch4:factoredenergy})\footnote{$k^*$ was not
computed exactly, but approximated by the maximum cluster size of a clique tree constructed
using the min-fill heuristic \cite[Sec.~9.4.3]{koller2009probabilistic}.}.
We refer to these as the
Sikora \cite{sikora2019population} ($n=13$, $\ell=6$, $k^*=5$),
Lipson \cite{lipson2020ancient} ($n=12$, $\ell=12$, $k^*=7$) and
M{\"u}ller \cite{muller2022bayesian} ($n=40$, $\ell=358$, $k^*=54$) networks.
For the Sikora and Lipson networks, we used the admixture weights as inheritance weights,
and the drift weights as edge lengths.
Any degree-2 nodes were suppressed, and any length-0 edges were reassigned the minimum
positive edge length.
We avoided length-0 edges because they introduce deterministic factors that require more
sophisticated belief updates \cite[Sec.~SM-F]{teo2024leveraging}.
Such cases will be handled in a future version of \texttt{PhyloGaussianBeliefProp}.
For the M{\"u}ller network, we used the edge lengths provided (which represent calendar time)
and estimated the inheritance weights for each hybrid edge from the recombination
breakpoints provided (see
\href{https://github.com/bstkj/graphicalmodels_for_phylogenetics_code/blob/main/real_networks/muller2022_nexus2newick.jl}{
\texttt{muller2022\_nexus2newick.jl}} in
\url{https://github.com/bstkj/graphicalmodels_for_phylogenetics_code}).

\subsubsection{Simulating trait data}

For each network, we used \texttt{PhyloNetworks} v0.16.4 \cite{2017phylonetworks} to simulate
100 datasets under a univariate $(p=1)$ and multivariate $(p=4)$ Brownian motion trait model
with root mean $\mu_0=0$, assuming a weighted-average model at the hybrid nodes
(Def.~\ref{ch4:def:mergingrule}).
The following variance rates were used: $\sigma^2_0=1$ for univariate data, and for
multivariate data
\[
\bm{\Sigma}_0 = 
    \begin{bmatrix}
        0.8 & -0.71 & -0.8 & 0.49 \\
        \cdot & 0.8 & 0.81 & -0.41 \\
        \cdot & \cdot & 1.1 & -0.4 \\
        \cdot & \cdot & \cdot & 0.5
    \end{bmatrix}
    \;.
\]
The choice of $\bm{\Sigma}_0$ was motivated by real data from \cite{2023Thorson-phylosem}
on four leaf traits: photosynthetic rate, leaf lifespan, leaf mass per area, and leaf
nitrogen content (Appendix~\ref{ch4:sm:pathdiagram}).
The covariance matrix implied by the path diagram in
\cite[Fig.~2 (top)]{2023Thorson-phylosem}, estimated by phylogenetic structural equation
models, is $\bm{\Sigma}_0\cdot 10^{-3}$.
The corresponding correlation matrix between the 4 traits is approximately
\[\hspace{1.3em}\bm{\rho}_0=\begin{bmatrix}
  1 & -0.887 & -0.853 & 0.775 \\
  \cdot & 1 & 0.863 & -0.648 \\
  \cdot & \cdot & 1 & -0.539 \\
  \cdot & \cdot & \cdot & 1
  \end{bmatrix} \;.
\]

\subsubsection{Accuracy and runtime versus maximum cluster size $k$}
\label{ch4:pickk}

For each network and dataset, we ran LBP on cluster graphs generated using join-graph
structuring for different values of $k$, ranging from 3 to $k^*$.
The smallest value was $k=3$ because all networks here are
\emph{bicombining}, where each hybrid node has two parents.
For the Sikora and Lipson networks, we varied $k$
from 3 to 5 and 3 to 7 by increments of 1. 
For the M{\"u}ller network, we varied $k$ from 3 to 20 by increments of 1,
from 25 to 50 by increments of 5,
then from 51 to 54 by increments of 1.
For each task (e.g., LBP for a given network, dataset, and $k$) we evaluated the FE at the
true parameter values $\mu_0$, $\bm{\Sigma}_0$ after calibration was detected or after 50
iterations had passed, whichever came first, then recorded the relative deviation
$\lvert\frac{\FE-\LL}{\LL}\rvert$ and the combined runtime for passing messages and
computing the FE.
Each task was run on a separate physical core, using a single thread, of an Intel Xeon
E5-2680 v3 processor (30M Cache, 2.50GHz).

The computed FE values only approximate the true FE values since calibration was detected
within a specified numerical tolerance, or may not even have been detected;
but \eqref{ch4:eq:fe}
can be well-defined before calibration.
We excluded the runtimes for building the cluster graph, initializing and regularizing
beliefs, and generating a schedule, though these were negligible in comparison.
Since our software is written in Julia \cite{bezanson2017:Julia}, a Just-In-Time compiled
language, compilation latency affects the initial run, but not subsequent runs on similar
inputs. Thus, each task was repeated to record the second runtime.

We plotted relative deviation and runtime against $k$ to choose a $k$ that strikes a good
balance between accuracy and runtime
(Figs.~\ref{ch4fig:pickk_sikoralipson}--\ref{ch4fig:pickk_muller}).
This choice was made separately for each combination of network and trait dimension.

\subsubsection{Parameter estimation with a chosen $k$}
\label{ch4:methods:pickk}

Using $k$ chosen in the previous step, we compared the accuracy of MFE and ML estimation for
each network and trait dimension.
The ML estimates were computed exactly using \eqref{ch4:eq:mle}, while the MFE estimates were
obtained by numerically optimizing an approximate FE (see above) using the L\nobreakdash-BFGS
algorithm (inverse Hessian approximated from past 10 steps) \cite{nocedal2006:numopt} with
Hager-Zhang line search \cite{hager2006:linesearch} and with the FE gradient approximated by
finite differences.
We started the optimization at
$\widehat{\mu}_\text{init} = \frac{1}{n}\sum_{i=1}^n\mathrm{Y}_i$ and
$\widehat{\bm{\Sigma}}_\text{init} = \frac{1}{n\mathrm{h}}\sum_{i=1}^n(\mathrm{Y}_i-
\widehat{\mu}_\text{init})(\mathrm{Y}_i-\widehat{\mu}_\text{init})^\top$, where
$\mathrm{h}=\med_{1\le i\le n}(\mathbf{P}_y)_{ii}$ is a measure of the network's height and
can be obtained in linear time in the number of nodes
\cite[Prop.~2]{2018Bastide-pcm-net}.
These are the ML estimates under a BM model for an ultrametric star tree, for which the leaf
states have equal variances but no phylogenetic correlation.
As above, the FE was computed after calibration was detected or after 50 iterations had
passed.
The maximum number of optimization steps was set to 50, with early termination triggered if,
after any step, the FE increased by less than 0.01\%, or if all coordinates of the
approximated gradient had magnitude less than $10^{-8}$.

We compared the root-mean-square (RMS) of the difference between
the MFE and ML estimates of $\mu_0$,
$\sigma^2_0$ (or $\bm{\Sigma}_0$).
Each task (i.e., FE optimization for a given network and dataset) was run on a
separate physical core, using a single thread, of an Intel Xeon E5-2687 v2 processor (30M
Cache, 2.70GHz).

\section{Results}

\subsection{Simulations}

\subsubsection{Accuracy and runtime versus maximum cluster size $k$}

For the Sikora and Lipson networks, which are moderately complex ($k^*=5,7$
respectively), accuracy and runtime generally improved as $k$ increased
(Fig.~\ref{ch4fig:pickk_sikoralipson}).
The relative deviation $\lvert\frac{\FE-\LL}{\LL}\rvert$ dropped sharply from
slightly under $10^{-3}$
for $k<k^*$ to under $10^{-12}$ at $k=k^*$.
Thus, we picked $k=k^*$ (using a clique tree) to minimize error and runtime.

\begin{figure*}[!h]
  \centering
  \includegraphics[scale=.23]{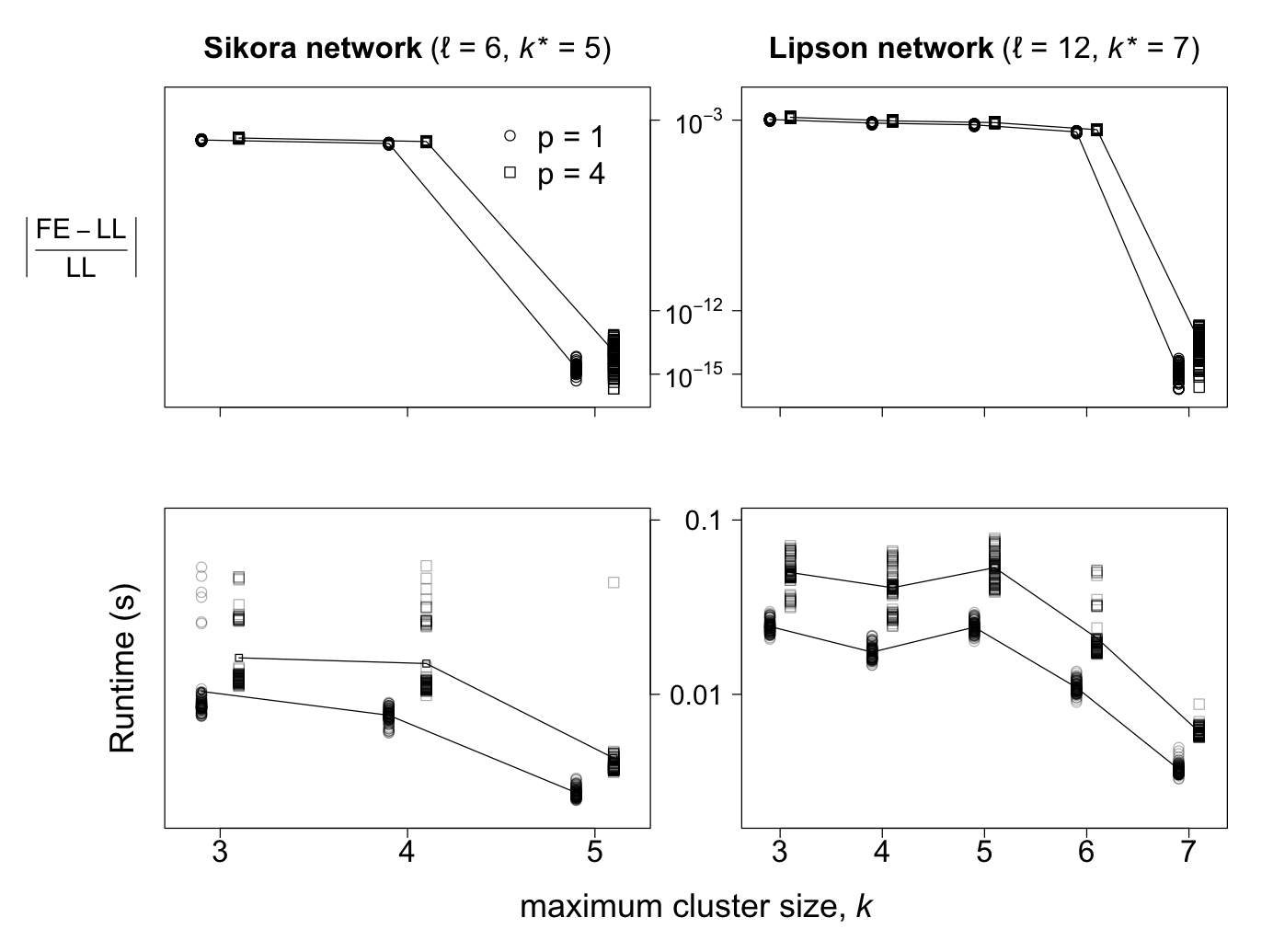}
  \vspace{-1em}
  \caption[Relative deviation and runtime vs maximum cluster size for the Sikora and
    Lipson networks]{Relative deviation and runtime (on the log-scale, vertical axis)
    versus maximum cluster size $k$ for the Sikora and Lipson networks, in the
    univariate and multivariate cases (across 100 datasets each).
    Trajectories of the mean values as $k$ increases are traced.
    Top: relative deviation $\lvert({\FE-\LL})/{\LL}\rvert$.
    Bottom: post-compilation runtimes for passing messages and evaluating the
    FE (upon detection of calibration or after the first 50 iterations).
    For the Lipson network in the multivariate case, the mean number of iterations
    before calibration was detected ($\pm$ 1 standard deviation)
    is 7.87 $\pm$ 0.32 iterations when
    $k=4$ versus 11.22 $\pm$ 0.44 iterations when $k=5$.
    This explains the rise in runtime from $k=4$ to $k=5$.
  }\label{ch4fig:pickk_sikoralipson}
\end{figure*}

For the M{\"u}ller network, which is highly complex ($k^*=54$), accuracy and
runtime generally improved with $k$ in the univariate case, but varied less
trivially with $k$ in the multivariate case (Fig.~\ref{ch4fig:pickk_muller}).
Relative deviation varied similarly in the univariate ($p=1$) and multivariate ($p=4$) cases,
greatly exceeding 1 (100\%) for $k<10$, dropping below 0.1 (10\%) for
$k\ge 11$, and below 0.01 (1\%) when $k\ge 25$ for $p=1$, or when $k\ge 20$ for $p=4$.
It 
decreased rapidly from $k=3$ to $10$, and then more gradually from
$k=11$ to $53$, with a sharp drop (on the log-scale) at $k=k^*=54$.
In contrast, runtime differed markedly between $p=1$ and $p=4$.
For $p=1$, runtime increased slightly from $k=3$ to $35$ then decreased
after $k\ge 35$, therefore we picked $k=k^*$ as the best value.
For $p=4$, runtime increased substantially between $k=16$ and $18$,
so we picked $k=11<k^*$ for $p=4$ to achieve both a shorter runtime and good accuracy.

\begin{figure*}[!t]
  \centering
  \includegraphics[scale=.23]{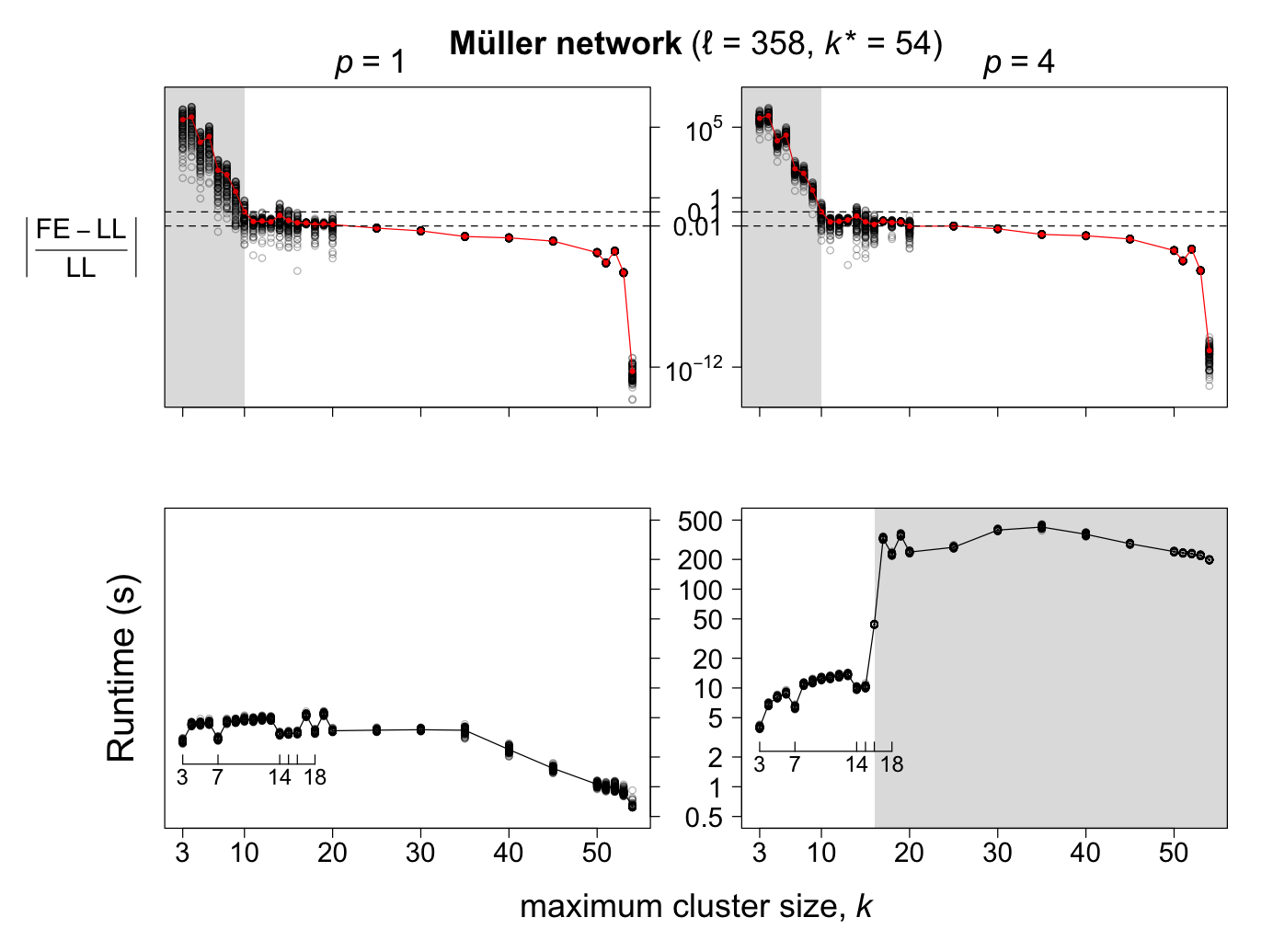}
  \vspace{-1em}
  \caption[Relative deviation and runtime vs maximum cluster size for the M{\"u}ller
    network]{Relative deviation and runtime (on the log-scale, vertical axis)
    versus maximum cluster size $k$
    for the M{\"u}ller network, across 100 datasets.
    Trajectories of the mean values are traced.
    Top: relative deviation $\lvert({\FE-\LL})/{\LL}\rvert$.
    The horizontal lines mark relative deviations of 10\% and 1\%.
    The shaded regions indicate values of $k$ with highly inaccurate FE
    (average error $\gtrsim$ 10\%).
    Bottom: post-compilation runtimes for passing messages and evaluating the
    FE (upon detection of calibration or after the first 50 iterations).
    The shaded regions indicate values of $k$ where runtime is notably worse than
    for other values of $k$.
    In the univariate case (left), calibration was consistently
    detected within 50 iterations after $k=35$.
    In the multivariate case (right), there is a large increase in runtime around $k=16$.
    For $k=3,7,14,15,16,18$ (marked on inset axis),
    the message schedule used two trees to span the cluster graph,
    while for other values of $k$, the schedule used three trees. This explains the
    observed dips in runtime at $k=3,7,14,15,16,18$.
    }\label{ch4fig:pickk_muller}
\end{figure*}

Calibration was consistently detected within 50 iterations across all datasets for the
Sikora and Lipson networks for all $k$, and for the M{\"u}ller network
only for $k\ge 35$, with the number of iterations required generally decreasing in $k$
(Table~\ref{ch4:tab:pickk_calibration}).
We expect runtime to decrease with $k$ in these cases, as the decrease in the
number of passed messages may outweigh the increased cost per message.
When calibration was not detected within 50 iterations, then the runtime was for 50 iterations
exactly, and we expect it to increase with $k$ due to more expensive messages.
The observed runtimes generally conformed with our expectations, with two exceptions:
runtime was not monotone decreasing from $k=3$ to $5$ for the Lipson network,
and not monotone increasing from $k=3$ to $19$ for the M{\"u}ller network.
The former can be explained by differences in the mean number of iterations before
calibration was detected
(7.9 $\pm$ 0.3 iterations when $k=4$ versus 11.2 $\pm$ 0.4 iterations when $k=5$),
while the latter can be explained by differences in the length
of the message schedule (two 
versus three spanning trees).

\begin{table}[h]
    \begin{center}
    \caption{Mean number of iterations ($\pm$ 1 standard deviation) to calibration for the
    M{\"u}ller network at $\mu_0$ and $\sigma^2_0$ ($\bm{\Sigma}_0$).}
    \label{ch4:tab:pickk_calibration}
    \begin{tabular}{|c|c|c|}
        \hline
        & \multicolumn{2}{c|}{mean number of iterations to calibration} \\
        \cline{2-3}
        $k$ & univariate $(p=1)$ & multivariate $(p=4)$ \\
        \hline
        35 & 43.18 $\pm$ 3.40 & 46.38 $\pm$ 1.65 \\
        40 & 24.69 $\pm$ 1.65 & 26.99 $\pm$ 1.40 \\
        45 & 13.34 $\pm$ 0.84 & 13.99 $\pm$ 0.17 \\
        50 & 7 $\pm$ 0.67 & 7.66 $\pm$ 0.48 \\
        51 & 5.85 $\pm$ 0.36 & 6.02 $\pm$ 0.14 \\
        52 & 5.62 $\pm$ 0.51 & 5.98 $\pm$ 0.14 \\
        53 & 4.18 $\pm$ 0.39 & 4.62 $\pm$ 0.49 \\
        54 & 2 & 2 \\
        \hline
    \end{tabular}
\end{center}
\end{table}

\subsubsection{Parameter estimation with a chosen $k$}
\label{ch4:results:params}
Since $k<k^*$ (using a loopy cluster graph) was chosen only for the M{\"u}ller network in the
multivariate case, we only assessed the accuracy of LBP for parameter estimation in that
setting.
We noticed a bimodal distribution in the relative deviation
$\lvert({\widehat{\text{MFE}}-\widehat{\text{ML}}})/{\widehat{\text{ML}}}\rvert$
between the optimized FE ($\widehat{\text{MFE}}$) and the
true maximum LL at the true ML estimate ($\widehat{\text{ML}}$).
The relative deviation was below 5\% 
for a ``good'' group of 67 datasets,
and exceeded 27\% 
for the remaining ``bad'' 33 datasets (Fig.~\ref{ch4fig:opt_muller_mv}).

For the mean parameter, the MFE and ML estimates were similarly far from $\mu_0$,
though close to each other, across both groups (Fig.~\ref{ch4fig:opt_muller_mv2}).
For the variance and correlation parameters, the MFE and ML estimates were similarly close
to $\bm{\Sigma}_0$, $\bm{\rho}_0$ in the ``good'' group, though the MFE estimates could be much worse than
the corresponding ML estimates in the ``bad'' group.
This ``bad'' group led to an increased root-mean-square error for the MFE estimator
compared to the exact ML estimator (Fig.~\ref{ch4fig:opt_muller_mv_rmse}).

\section{Discussion}
\noindent
We showed empirically that the factored energy 
can closely approximate the log-likelihood 
using cluster graphs of maximum cluster size $k$ much smaller than needed for exact
inference using a clique tree ($k^*$).
Further, we demonstrated that such $k$, assessed for good FE approximation to the LL
at the true parameter values only, 
can be effective more generally for parameter estimation by
numerically optimizing the FE.
The resulting MFE estimates were very close
to the theoretical ML estimates a majority of the time.
Theoretical analysis reveals that this good performance is
unsurprising, since the LL surface and the FE surface assuming perfect calibration
are parallel over the parameter space under our simulation model:

\begin{theorem}\label{ch4:thm:surface}
    Consider the Brownian motion model of evolution for a $p$-dimensional trait,
    with root state
    $\mu$ and homogeneous variance rate $\bm{\Sigma}$, on a bicombining phylogenetic network
    with observed data $\mathrm{Y}$ at its leaves.
    Assume that all tree edges have positive length, and each hybrid node
    has at least one parent edge with positive length.
    Let $\mathcal{U}$ be a valid cluster graph for this model,
    as in Def.~\ref{ch4:def:clustergraph}.
    If $\mathcal{U}$ can be calibrated, then its factored energy differs from the
    log-likelihood by an additive constant across all $\mu\in\mathbb{R}^p$ and
    $\bm{\Sigma}\in\mathbb{R}^{p\times p}$ positive-definite.
\end{theorem}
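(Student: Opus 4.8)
\emph{Proof plan.} The strategy is to pin down the $(\mu,\bm{\Sigma})$-dependence of both $\widetilde{F}_{\mathcal U}$ and $\LL$ explicitly and show they agree up to an additive constant. The starting point is the \emph{Kronecker structure} of the homogeneous Brownian motion model: each factor $\phi_v$ of a non-root node $v$ is a non-degenerate $p$-variate Gaussian conditional density with covariance $t_v\bm{\Sigma}$, where $t_v=\branchLength{e_v}>0$ for a tree node $v$ with parent edge $e_v$, and $t_v=\sum_k\gamma(e_k)^2\branchLength{e_k}>0$ for a hybrid node $v$ under the weighted-average rule (Def.~\ref{ch4:def:mergingrule}); positivity of all $t_v$ is exactly what the hypotheses on edge lengths buy us. Moreover the mean coefficient $\bm{q}_v$ has the form $\bm{a}_v^\top\otimes\bm{I}_p$ (for a tree edge $\bm{a}_v=1$, for a hybrid $\bm{a}_v=(\gamma(e_1),\dots,\gamma(e_m))$), so in canonical form $\phi_v$ has precision $\bm{K}_{\phi_v}\otimes\bm{\Sigma}^{-1}$ and potential $\mtov(\bm{\Sigma}^{-1}\bm{M}_{\phi_v})$, with $\bm{K}_{\phi_v}$ fixed by the network and $\bm{M}_{\phi_v}$ affine in $(\mu,\mathrm{Y})$ and free of $\bm{\Sigma}$ (here the root state $\mu$, and any observed leaf value, enters the relevant factors as a constant after instantiation). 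I would then check that this form is closed under the Gaussian belief-propagation operations of Alg.~\ref{ch4:alg:bp}: multiplication and division of beliefs (adding/subtracting precisions and potentials) trivially preserve it, and marginalization is a Schur complement that sends $(\bm{K}\otimes\bm{\Sigma}^{-1})$-blocks to $(\bm{K}_{AA}-\bm{K}_{AB}\bm{K}_{BB}^{-1}\bm{K}_{BA})\otimes\bm{\Sigma}^{-1}$ and keeps the potential of the form $\mtov(\bm{\Sigma}^{-1}\bm{M})$ with $\bm{M}$ still affine in $(\mu,\mathrm{Y})$. Hence any calibrated state has cluster (and sepset) beliefs with precision $\bm{K}_i\otimes\bm{\Sigma}^{-1}$ — $\bm{K}_i$ positive-definite at calibration and independent of $(\mu,\bm{\Sigma})$ — and mean $\mtov(\bm{M}_i\bm{K}_i^{-\top})$, which is independent of $\bm{\Sigma}$.

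Next I would substitute into the factored energy \eqref{ch4:eq:fe}. The (differential) entropy of a Gaussian over $c$ internal nodes with covariance $\bm{K}^{-1}\otimes\bm{\Sigma}$ equals a $(\mu,\bm{\Sigma})$-free constant plus $\tfrac c2\log\det\bm{\Sigma}$; summing over clusters, subtracting over sepsets, and using property~3 of Def.~\ref{ch4:def:clustergraph} (the clusters and sepsets containing a fixed internal node form a subtree, so their counts differ by $1$ for each such node), the entropy terms of \eqref{ch4:eq:fe} contribute a constant plus $\tfrac{|V|-n-1}{2}\log\det\bm{\Sigma}$. For the energy terms, $\log\phi_v|_\mathrm{Y}$ equals $\text{const}_v-\tfrac12\log\det\bm{\Sigma}-\tfrac1{2t_v}R_v^\top\bm{\Sigma}^{-1}R_v$ with $R_v:=X_v-\bm{q}_vX_{\pa(v)}$, which is affine in the cluster variables with trait-space coefficient of the form $\bm{b}_v^\top\otimes\bm{I}_p$; taking $\E_{\beta^*_{\alpha(\phi_v)}}$ and writing $\rho_v:=\E_{\beta^*_{\alpha(\phi_v)}}[R_v]$, the quadratic part splits as $t_v^{-1}\rho_v^\top\bm{\Sigma}^{-1}\rho_v+t_v^{-1}\tr(\bm{\Sigma}^{-1}\cov_{\beta^*}(R_v))$, and the Kronecker form forces $\cov_{\beta^*}(R_v)=(\bm{b}_v^\top\bm{K}_i^{-1}\bm{b}_v)\bm{\Sigma}$, so its trace against $\bm{\Sigma}^{-1}$ is the $(\mu,\bm{\Sigma})$-free constant $p\,\bm{b}_v^\top\bm{K}_i^{-1}\bm{b}_v$. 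Summing over the $|V|-1$ non-root factors, I get $\widetilde{F}_{\mathcal U}=(\text{constant})-\tfrac n2\log\det\bm{\Sigma}-\tfrac12\sum_{v}t_v^{-1}\rho_v^\top\bm{\Sigma}^{-1}\rho_v$, where the $\log\det\bm{\Sigma}$ coefficients combine as $\tfrac{|V|-n-1}{2}-\tfrac{|V|-1}{2}=-\tfrac n2$; this already matches the $-\tfrac n2\log\det\bm{\Sigma}$ appearing in $\LL$ (cf.\ \eqref{ch4:eq:mle}).

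It remains to show that $\sum_v t_v^{-1}\rho_v^\top\bm{\Sigma}^{-1}\rho_v$ also coincides, up to a constant, with the quadratic form of $\LL$. Here I would invoke the exactness of the \emph{posterior means} under Gaussian belief propagation: on any calibrated cluster graph, every belief's mean equals the true conditional mean $\E_{p_\theta}[\cdot\mid\mathrm{Y}]$ (even though its covariance need not; see the walk-sum analysis in \cite{malioutov2006walk}). Thus each $\rho_v$ is the exact posterior expected increment $\E_{p_\theta}[R_v\mid\mathrm{Y}]$, a quantity that does not depend on the choice of valid cluster graph or on the factor-to-cluster assignment $\alpha$. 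Applying the decomposition of the previous paragraph to \emph{any} clique tree for the model — one exists, it can always be calibrated, and there $\widetilde{F}=\LL$ exactly — yields $\LL=(\text{constant}')-\tfrac n2\log\det\bm{\Sigma}-\tfrac12\sum_v t_v^{-1}\rho_v^\top\bm{\Sigma}^{-1}\rho_v$ with the \emph{same} $\rho_v$. Subtracting, $\widetilde{F}_{\mathcal U}-\LL$ is the difference of the two constants, which depends on the network, the cluster graph and the data but not on $(\mu,\bm{\Sigma})$; this is the claim. (One can double-check the $\log\det\bm{\Sigma}$ coefficient via equivariance: applying an invertible affine map $x\mapsto\bm{A}x+b$ to every node trait carries the model with parameters $(\mu,\bm{\Sigma})$ and data $\mathrm{Y}$ to the model with parameters $(\bm{A}\mu+b,\bm{A}\bm{\Sigma}\bm{A}^\top)$ and data $\bm{A}\mathrm{Y}+b\,1_n^\top$, and shifts both $\LL$ and $\widetilde{F}_{\mathcal U}$ by $-n\log\lvert\det\bm{A}\rvert$.)

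The main obstacle is the closedness-under-belief-propagation step, and specifically two points: that every marginalization is a genuine Schur complement, i.e.\ that each eliminated block $\bm{K}_{BB}$ is invertible (with the right sign so the message is finite), and that the calibrated state actually reached has the asserted Kronecker form with exact means. This is precisely where the hypotheses are needed — positive tree-edge lengths and at least one positive parent-edge length per hybrid make all $t_v$ positive, so every factor is an honest non-degenerate Gaussian; the regularization of \cite[Alg.~R4]{teo2024leveraging} makes the initial cluster beliefs normalizable; and the assumption that $\mathcal{U}$ \emph{can be calibrated} excludes the infinite-message obstructions of Sec.~\ref{ch4:sched&reg}. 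Granting these, the precision recursions are well posed and everything else is routine Gaussian bookkeeping together with the running-intersection count.
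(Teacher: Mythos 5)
Your proposal is correct and follows essentially the same route as the paper's proof: both arguments rest on the parameter-free Kronecker structure $\mathbf{J}\otimes\bm{\Sigma}^{-1}$ of the calibrated belief precisions, the $\bm{\Sigma}$- and cluster-graph-independence of the calibrated means, the running-intersection count that yields the $\frac{|V|-n-1}{2}$ coefficient from the entropy terms, and a comparison with a clique tree (where $\widetilde{F}_\mathcal{U}=\LL$) to transfer the identified parameter dependence to the log-likelihood. The only substantive difference is packaging: you write explicit decompositions of the FE and LL as a parameter-free constant plus $-\frac{n}{2}\log\det\bm{\Sigma}$ and quadratic forms in the exact posterior means, whereas the paper derives difference equations in $\mu$ and $\bm{\Sigma}$ and concludes that the two surfaces are parallel via a short telescoping argument.
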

\begin{proof}
    See Appendix~\ref{ch4:sm:mfe}.
\end{proof}

The prospect of reliable MFE estimation is perhaps only relevant for networks of high
complexity, i.e., large $k^*$, so that there is a potential trade-off between
approximation error and runtime.
For simpler networks and smaller trait dimensions, belief propagation on a
clique tree appears to run faster than LBP.

\subsection{Phase transitions for accuracy and runtime with maximum cluster size}
\noindent
Given that the FE can approximate the LL well using $k$ much smaller than $k^*$, it is
natural to ask
(1) if the approximation error
decreases with $k$, and
(2) if there is a clear threshold for $k$ beyond which the error
becomes acceptably low.
In our simulations, the approximation error
appeared to decrease with $k$,
(Figs.~\ref{ch4fig:pickk_sikoralipson}--\ref{ch4fig:pickk_muller}).
Notably, on the highly complex M{\"u}ller network, the FE
approximation error showed two distinct regimes: a fast rate of
decrease followed by a slow rate of decrease,
with a regime transition at $k=11$.
This transition aligned with the threshold $k\ge 10$ to reach a reasonable
approximation with at most 10\% error. For $k<10$, the error exceeded 100\%.
The abruptness of this transition suggests that $k\ge 10$ might correspond to the
the required cluster size for calibration to be possible,
though calibration does not intrinsically guarantee low error.
Indeed, for the Sikora and Lipson networks,
for which the approximation error was always under 0.1\%,
calibration was detected fairly quickly (within 8 iterations).
For the M{\"u}ller network, calibration was detected consistently (within 50 iterations)
only for $k\ge 35$.
Increasing the maximum number of iterations beyond 50
showed that calibration could be consistently detected for $k\ge 10$.
However, more iterations led the FE to diverge further for $k<10$.
This boost in convergence and divergence from more iterations provides
an explanation for the apparent phase transition in the FE accuracy
after the calibration threshold $k=10$.

Runtime generally decreased with $k$ for the Sikora and Lipson networks.
In contrast, runtime generally increased until $k\le 35$ and decreased
after $k\ge 35$ for the M{\"u}ller network.
These trends can mostly be explained in terms of the relative importance, for different
problem sizes (defined by network size and trait dimension), of having to compute more
expensive messages versus having to pass
fewer messages for calibration as $k$ increases.
However, there are aspects of runtime behavior that require deeper explanation.
Notably, for the M{\"u}ller network for $p=4$,
there was a large increase in runtime around $k=16$,
after which runtime was stable.
We suspect that a separate phase transition for runtime is associated with memory latency,
only observed when trait dimension and network complexity,
both of which parametrize the space complexity of (L)BP, are higher.

Assuming that reasonable accuracy cannot be attained before
some calibration threshold $k$, it makes sense to pick
a cluster graph with this $k$ or higher when attempting to find a
good trade-off between accuracy and runtime.
Additionally, knowledge of the problem size and memory resources available should suggest a
practical upper bound for $k$. 

\subsection{Choosing the maximum cluster size $k$}
\noindent
In practice, we need a way to choose $k$ without expending too much computational effort.
For accuracy and runtime considerations, this choice should intuitively depend on network
complexity and trait dimension.
Our simulations suggest that on networks with complexity as high as $k^*=54$ and for
small trait dimensions ($p\le 4$), good accuracy may be attained for $k\approx 10$.
Our results also suggest that using a clique tree can
still be optimal with both lowest runtime and best accuracy
for a univariate trait ($p=1$).
Since the space and time complexities of representing a belief and sending messages are
parametrized by powers of
the dimension $pk$ of the largest cluster's belief precision
(Sec.~\ref{ch4:reliability}), a blunt rule could be to choose
$k=k^*$ if $pk^* \lesssim 60$ and $p\le 2$,
and $k=\min(cp,k^*)$
otherwise, where $c\approx 3$.
This rule interpolates our simulation results.
It favors exact BP when runtime is expected to be feasible, and
otherwise chooses LBP with large-enough clusters to attempt reliable calibration.
Naturally, its applicability is machine-dependent and should be further evaluated.
However, it considers the effects of problem size on runtime, and has the advantage of being
computationally cheap to apply:
$k^*$ is estimated once per network (e.g., using the min-fill heuristic, which has polynomial
complexity in the number of nodes), independently of the observed data.

Choosing a good $k$ is necessary but not sufficient
for accurate parameter estimation.
Another choice to be made is the maximum number of iterations $n_\text{iter}$
of LBP, at each fixed parameter $\theta$,
before $\text{FE}(\theta)$ is computed.
Ideally, the FE should be close to its limiting value after $n_\text{iter}$ iterations.
Fortunately, this may occur prior to calibration.
In examples from \cite[Fig.~6]{teo2024leveraging}, the FE converged faster and 
more smoothly towards its limit than did the cluster graph beliefs.
In our simulations, we used $n_\text{iter}=50$ with reasonable effectiveness,
regardless of $k$.
Alternatively, we could have sought to jointly optimize $k$ and $n_\text{iter}$ for good
runtime and accuracy.
For each $k$ and for some fixed reasonable $\theta$, $n_\text{iter}(k)$ can be chosen
by tracking the convergence of the FE (a computationally expensive task) across iterations of LBP.
The overall performance of different $(k,n_\text{iter}(k))$ can then be compared over
multiple datasets.

\subsection{Making Gaussian LBP useful for phylogenetics}
\noindent
Gaussian LBP opens the door to many applications in phylogenetics,
for analyses of large datasets under complex and heterogeneous evolutionary models
along complex phylogenetic networks, for which no methods currently exist.
The utility of Gaussian LBP for phylogenetics will require
reliable parameter estimation and scalability.

\subsubsection{Reliability}
\label{ch4:reliability}

MFE estimation appeared to be a reliable proxy for ML estimation, for $k$ and $n_\text{iter}$
large enough, in a majority of our simulations.
However, there were practical challenges in computing the FE.
In Gaussian LBP, beliefs and messages are parametrized as log-quadratic forms with a
precision
matrix $\mathbf{K}$ (positive semidefinite)
and a potential
vector $h$ \cite[Sec.~4(c)]{teo2024leveraging}:
\begin{equation}\label{ch4:eq:gbp}
    \begin{aligned}
    \beta_i \Big(\begin{bmatrix}x_\mathrm{S} \\ x_\mathrm{I}\end{bmatrix}\Big)
        &\propto \exp\biggl(-\frac{1}{2}
            \begin{bmatrix}x_\mathrm{S} \\ x_\mathrm{I}\end{bmatrix}^\top
            \overbrace{\begin{bmatrix}\mathbf{K}_\mathrm{S} &
            \mathbf{K}_{\mathrm{S},\mathrm{I}} \\ \mathbf{K}_{\mathrm{S},\mathrm{I}}^\top &
            \mathbf{K}_\mathrm{I}\end{bmatrix}}^{\mathbf{K}}
            \begin{bmatrix}x_\mathrm{S} \\ x_\mathrm{I}\end{bmatrix} \\ 
            &\hspace{3.5em} + \
            \Bigl(\overbrace{\begin{bmatrix}h_\mathrm{S} \\ h_\mathrm{I}\end{bmatrix}}^h\Bigr)^\top
            \begin{bmatrix}x_\mathrm{S} \\ x_\mathrm{I}\end{bmatrix}\biggr) \\
        \widetilde{\mu}_{i\rightarrow j}(x_\mathrm{S}) 
        &\propto \exp\biggl(-\frac{1}{2}
            \lVert x_\mathrm{S}\rVert^2_{\mathbf{K}/\mathbf{K}_\mathrm{I}} \\
            &\hspace{3.5em} + \ (h_\mathrm{S}-
            \mathbf{K}_{\mathrm{S},\mathrm{I}}\mathbf{K}_\mathrm{I}^{-1}h_\mathrm{I})^\top
            x_\mathrm{S}\biggr)
    \end{aligned}
\end{equation}
where $x_\mathrm{S}$ are the variables in $\mathcal{S}_{i,j}$
(the \textbf{S}cope of the message), $x_\mathrm{I}$ are the remaining variables
in $\mathcal{C}_i\setminus\mathcal{S}_{i,j}$ (to be \textbf{I}ntegrated out) and
$\mathbf{K}/\mathbf{K}_\mathrm{I}\coloneq\mathbf{K}_\mathrm{S}-\mathbf{K}_{\mathrm{S},
\mathrm{I}}\mathbf{K}_\mathrm{I}^{-1}\mathbf{K}_{\mathrm{S,\mathrm{I}}}^\top$.
All beliefs must have a positive-definite precision for the FE to be well-defined, and
presumably be close enough to calibration for the LL to be well approximated.
Thus, the FE is typically computed upon multiple traversals of the cluster graph, which makes
the resulting beliefs susceptible to numerical error (e.g., truncation) and hence limits the
numerical precision of the FE.
To send a message, a principal submatrix of the belief precision is inverted
($\mathbf{K}_\mathrm{I}$ above) but numerical error can render this submatrix
near-singular
and prevent the message from being computed.
This hinders calibration since beliefs are unable to ``communicate''
and keeps the FE ill-defined.
This is a crucial issue to deal with when many FE evaluations are required during
numerical optimization, such as to approximate the gradient or to adapt the step size using
line search.

In practice, we attributed failure
to attain positive-definite belief precisions during LBP,
despite prior regularization,
to numerical errors driven by a poor candidate parameter $\theta$ far from the optimal value.
For example, belief precisions are scaled by the variance rate $\bm{\Sigma}$ in the
homogeneous BM model (Appendix~\ref{ch4:sm:fede})
so that large values of $\bm{\Sigma}$ can effectively overwhelm
adjustments made by regularization, which may be much smaller.
To deal with this practical challenge, our implementation returned a very large value for
the FE during optimization (e.g., \texttt{Inf} or $10^{10}$) whenever
an infinite message is encountered or the FE was ill-defined.
This had the effect of tuning down the magnitude of the step size of
the optimization routine to propose the next candidate value $\theta$,
and was effective at steering candidate values towards the right order
of magnitude in our simulations.

In general, it is difficult to anticipate how implementation and input (e.g., regularization
method, trait model, candidate parameter values, network edge lengths)
jointly contribute to numerical errors.
Efforts to improve numerical robustness are further complicated by the
difficulty of checking if a message fails for numerical or theoretical reasons.
Theoretical work is needed to tease these cases apart, e.g., by proving or disproving
that regularization can theoretically guarantee well-defined messages throughout LBP.

\subsubsection{Scalability}
\label{ch4:scalability}

We suggested that LBP may scale better with network complexity than BP, or the naive
approach of computing the inverse covariance for the leaf states to obtain the LL.
Here, we compare the cost of these approaches theoretically
as the data 
and cluster graphs get larger and more complex. 

The cost of computing the LL or FE using (L)BP mainly consists of (i) the
cumulative cost for messages and (ii) the cost of either fully integrating the
root belief to obtain the LL for BP, or evaluating the FE at the final beliefs for LBP
(Appendix~\ref{ch4:sm:complexity}).
The cost of a message is $\mathcal{O}(d^3)$, where $d$ is the dimension of its precision
matrix and is at most $kp$.
Hence, cost (i)
is $\mathcal{O}(s(kp)^3)$, where $s$ is the scheduled number of messages to be passed.
Cost (ii) is $\mathcal{O}((kp)^3)$ for BP and $\mathcal{O}((v_\text{cg}+e_\text{cg})(kp)^3)$
for LBP, where $v_\text{cg}$ and $e_\text{cg}$ are the numbers of clusters and edges in the
cluster graph.
Since the cluster graph is connected 
and at least $2e_\text{cg}$ messages need to be passed
(one in each edge direction) before computing the FE, we have
$v_\text{cg}+e_\text{cg}\le 2e_\text{cg}\le s$.
Thus, the combined cost of (i) and (ii) is $\mathcal{O}(s(kp)^3)$ for (L)BP.
Now using $s^*$ and $k^*$ for BP on a clique tree and keeping $s$ and $k$ for LBP,
the above suggests that $sk^3\ll s^* {k^*}^3$ for LBP to be competitive with BP.

The cost of the naive approach depends on how the
inverse covariance $\mathbf{V}_\text{leaf}^{-1}$ for the leaves is computed.
Let $n$ and $\tilde{n}$
be the number of leaves and internal nodes in the network.
For linear Gaussian networks, $\mathbf{V}_\text{leaf}^{-1}$ can be obtained in
$\mathcal{O}((np)^3)$ as follows.
In a preorder traversal of the network, we can compute the
marginal (co)variances for node $v$:
$\var(X_v)=\bm{V}_v + \bm{q}_v\var(X_{\pa(v)})\bm{q}_v^\top$,
and $\cov(X_v,X_u)=\bm{q}_{v}\cov(X_{\pa(v)},X_u)$ for $u$ listed before $v$,
using notations from Def.~\ref{ch4:def:lineargaussian}.
Assuming that each hybrid node has a bounded number 
of parents (typically $2$),
these operations construct the covariance $\mathbf{V}_\text{all}$ over all
$n+\tilde{n}$ nodes in $\mathcal{O}((n+\tilde{n})^2 p^3)$.
For the BM model with the weighted-average rule, $\bm{q}_v$ has form
$r_v\otimes\mathbf{I}$ for some row vector $r_v$, so the cost
to construct $\mathbf{V}_\text{all}$ drops to
$\mathcal{O}((n+\tilde{n})^2 p^2)$.
Finally, we get get $\mathbf{V}_\text{leaf}$ as a submatrix
of $\mathbf{V}_\text{all}$, and invert it in $\mathcal{O}((np)^3)$.
The costs presented for (L)BP and the naive approach
are driven by matrix inversion, and have similar multiplicative constants
(Appendix~\ref{ch4:sm:complexity}).
For (L)BP to be competitive with the naive approach, the above suggests that the
scheduled number of messages should be $s<(n/k)^3$.

A clique tree constructed from a chordal graph over the same node set as the original
phylogenetic network 
has at most $n+\tilde{n}$ clusters \cite[Prop.~4.16]{golumbic2004:agt},
and join-graph structuring constructs clique trees with
exactly $n+\tilde{n}$ clusters \cite{mateescu2010join}.
In either case, the number of messages $s^*$ is at most $2(n+\tilde{n})$.
Combining with the earlier criterion, $(n+\tilde{n}) \ll (n/k^*)^3$ is favorable for BP
to be competitive with the naive approach.
In a rooted binary network $\tilde{n}=n-1+2h$
\cite[Lem.~2.1]{mcdiarmid2015:counting},
so this criterion simplifies to $(n+h)\ll(n/k^*)^3$.

We illustrate these bounds using the M\"uller network,
whose number of leaves $n=40$ is moderately small,
but with $h=361$ hybrids. 
Our clique tree had $k^*=54$, $e^*_\text{cg}=800$ edges,
and $s^*=1600$ to pass a message in each edge direction.
Our cluster graph with $k=11$ had $e_\text{cg}=1160$
and required $s\approx (980\times 2)\times 3 \times 50$ 
from traversing 3 spanning trees repeatedly $n_\text{iter}=50$ times.
As $sk^3 > s^* {k^*}^3$,
LBP is expected to be slower than BP,
which aligns with our runtime results in the univariate case
(Fig.~\ref{ch4fig:pickk_muller}, bottom-left).
However, both are expected to be slower than the naive
approach for the BM model,
because they require to pass many more messages than
$(n/k^*)^3\approx 0.4$ (for BP) 
or $(n/k)^3\approx 48$ (for LBP). 

The optimal number of scheduled messages $s^*$ is
known for BP on given a clique tree:
$s^*=e_\text{cg}^*$ (from one traversal).
However, for LBP it is generally unclear how to devise schedules that
minimize $s$, while getting close to calibration.
The reliance on suboptimal message schedules limits the utility of LBP.
Adaptive schedules have been proposed to improve the per-message efficiency of attaining
calibration
\cite{elidan2006residualBP,sutton2007residualBP,knoll2015weightdecayBP,aksenov2020parallelRBP},
yet these use more memory and introduce a computational overhead for selecting messages.
Although convergence of the FE is more relevant than calibration for deciding when to stop
passing messages, it is less used since repeated evaluations of the FE are computationally
impractical.

\subsubsection{Possible use cases}
The theoretical bounds above, which do not account for
implementation details such as cost of memory access,
shed some light on possible use cases for LBP to fit
Gaussian trait models on phylogenetic networks.
For example, it may not be easy for (L)BP to decrease runtime compared to the naive approach
since the covariance matrix for the leaves can be efficiently constructed.
This applies to standard models such as the BM
or Ornstein-Uhlenbeck (OU) process, possibly combined with edge length
transformations \cite[Ch.~6]{harmon2019} 
or heterogeneity in the evolutionary parameters
(e.g., the variance rate and optimum for the OU model)
across different sets of edges in the network.
For LBP to be beneficial, both $(n/k)^3$ and $s^*(k^*/k)^3$
have to be large enough to accommodate
multiple traversals of the cluster graph.
These conditions may be less favorable for smaller networks,
which generally have smaller $k^*$ \cite[Fig.~5]{teo2024leveraging}.

In conclusion, our analysis points
to two main settings where LBP could be useful:
large phylogenetic studies on thousands of leaves
\cite{kattge2020:try,tobias2022:avonet,maynard2022:treefunc,peeri2020:mrna_treeoflife}
when their history involves reticulate evolution, and
phylodynamic studies, which increasingly involve large highly-reticulated networks
\cite{suchard2018:beast,muller2020bayesian,muller2022bayesian}.
In particular, the framework it provides is amenable to the use of flexible multi-regime
Gaussian trait models \cite{Mitov2019,brahmantio2025:mixedGaussian} that we expect to see
wider usage of in both settings.

\subsection{Future work}
\noindent
We conclude by listing several theoretical and engineering challenges that could be tackled.
\begin{enumerate}
    \item Do our regularization strategy and spanning trees message schedule, in combination,
    ensure that all messages are theoretically well-defined?
    If not, can alternative belief initializations and schedules provide this
    guarantee, or a stronger one that ensures that belief precisions remain positive-definite
    after every belief update?
    \item How can we best incorporate regularization during message passing to improve
    robustness to numerical error?
    For example, if we know theoretically that a message should be finite,
    but requires inverting a submatrix that is singular due to numerical error,
    we can regularize the sender and sepset beliefs for this
    submatrix to be invertible. This does not approximate the original
    message but allows LBP to proceed,
    and can be applied regardless of whether
    the message is theoretically well-defined.
    Additional heuristics or theory could improve regularization techniques during message
    passing.
    \item We relied exclusively on join-graph structuring
    to construct cluster graphs of a desired maximum cluster size $k$.
    Can modifications or alternatives to join-graph structuring
    be devised to produce cluster graphs with better structural features
    to achieve calibration more often and faster, given a desired $k$?
    For example, larger sepsets tend to be favorable for calibration.
    However, join-graph structuring typically includes multiple size-1 sepsets by construction.
  \item Where to initialize the optimization?
  We used parameter estimates that assume no phylogenetic correlation,
  which adapt to the data yet are fast to calculate.
  A reasonable improvement could be to use the ML estimates assuming the \emph{major tree}
  \textemdash \ the tree obtained by keeping only the parent edge with the largest
  inheritance weight, for each node in the network\footnote{Unlabelled leaves, with no data,
  are pruned from the major tree. These may arise
  if the network is not \emph{tree-child},
  that is, if some internal node has hybrid children only.}
  These estimates can be efficiently computed for the BM model \cite{freckleton2012fast}.
  For other models such as the OU process, even naive estimates that assume no phylogenetic
  correlation require numerical optimization for some parameters.
\end{enumerate}
This non-exhaustive list stresses the equal importance of theory
and implementation for
making this tool more useful to applied researchers.

\section{Acknowledgements}
\noindent
We thank Paul Bastide for helpful feedback on an earlier draft.
This work was supported in part by the University of Wisconsin-Madison
Office of the Vice Chancellor for Research and Graduate Education
with funding from the Wisconsin Alumni Research Foundation;
and by the National Science Foundation through grants
DMS-2023239 to CA and
DMS-1929284 while BT and CA were in residence
at the Institute for Computational and Experimental Research in Mathematics in Providence, RI, during the 
``Theory, Methods, and Applications of Quantitative Phylogenomics" program.

\section*{Software and Data Availability}
\noindent
Data and code for all simulations and analyses are available at
\url{https://github.com/bstkj/adaptingclustergraphs_code}.
In particular,
the Julia package
\href{https://github.com/JuliaPhylo/PhyloGaussianBeliefProp.jl}{\texttt{PhyloGaussianBeliefProp}} was used
at commit
\href{https://github.com/JuliaPhylo/PhyloGaussianBeliefProp.jl/commit/f801d07a4fc4041a40b2df005ff595717ca0936e}{\texttt{f801d07}}.

\bibliographystyle{IEEEtran}
\bibliography{references}

\clearpage
\setcounter{page}{1}

\appendices

\newcounter{appendixfig} 
\renewcommand{\thefigure}{\thesection.\arabic{appendixfig}}
\renewcommand{\thesubsectiondis}{\arabic{subsection}.}
\renewcommand{\thesubsection}{\thesection.\arabic{subsection}}

\section{Covariances from path diagram}
\label{ch4:sm:pathdiagram}
\setcounter{appendixfig}{0}
\noindent
We derive here the trait covariance (rescaled by $10^{-3}$) implied by the path analysis in
\cite{2023Thorson-phylosem}, which we used in our simulation study for the multivariate case
with $p=4$ traits.

\begin{figure}[H]
    \stepcounter{appendixfig}
    \centering
    \includegraphics{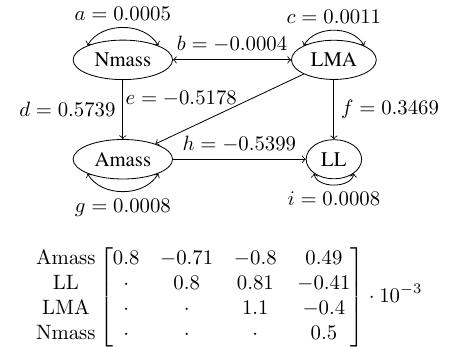}
    \caption[Path diagram of four leaf traits]{Left: Path diagram of four leaf traits: photosynthetic rate (Amass),
    leaf lifespan (LL), leaf mass per area (LMA), leaf nitrogen content (Nmass), reproduced
    from \cite[Fig.~2, top]{2023Thorson-phylosem}.
    Self-loops are labeled by the trait's variance, e.g., $\var(\text{LMA})=0.0011$.
    Bidirectional edges between two traits are labeled by their covariance, e.g.,
    $\cov(\text{Nmass},\text{LMA})=-0.0004$. Undirectional edges from one trait to
    another are labeled by linear dependence weights,
    e.g., $\text{Amass}=d\cdot \text{Nmass} + e\cdot\text{LMA} + \varepsilon$.
    Right: Covariance matrix implied by the path diagram. Each covariance is the sum of
    contributions from all valid paths between the two traits. The contribution of each path
    is determined by its edge weights. Calculations of the covariances from adding up
    distinct path contributions are given below. For example, there are three paths for
    $\cov(\text{Amass},\text{LL})$ and two paths for $\cov(\text{LMA},\text{Amass})$.}
\end{figure}

\begin{align*}
    \cov(\text{Amass},\text{LL}) &= gh + ecf + dbf = -0.00070914 \\
    \cov(\text{LMA},\text{Amass}) &= ce + bd = -0.00079914 \\
    \cov(\text{LMA},\text{LL}) &= cf + ceh + bdh = 0.00081305 \\
    \cov(\text{Nmass},\text{Amass}) &= ad + be = 0.00049407 \\
    \cov(\text{Nmass},\text{LL}) &= adh + beh + bf = -0.00040551
\end{align*}

\section{Likelihood and factored energy surfaces}
\label{ch4:sm:mfe}
\noindent
We consider a bicombining phylogenetic network $N=(V,E)$ with $n$ leaves,
where all tree edges have positive length, and each
hybrid node has at least one parent edge with positive length.
On this network, we assume the BM model for the evolution of a $p$-dimensional trait
with state $\mu$ at the root of the network and
homogeneous variance rate $\bm{\Sigma}$.

To prove Theorem~\ref{ch4:thm:surface},
we derive difference equations for how the FE changes when $\bm{\Sigma}$ is linearly
transformed or when $\mu$ is shifted, and prove their equivalence to the difference
equations for the LL.
This will show that the LL and FE surfaces are parallel,
as claimed in Theorem~\ref{ch4:thm:surface}.

Using notations from section~\ref{ch4:methods:mfe},
$Y=\mtov([Y_1\dots Y_n]) \in\mathbb{R}^{np}$ is the random vector of the leaf states, distributed as
$Y\sim\mathcal{N}(1_n\otimes\mu,\mathbf{P}_y\otimes\bm{\Sigma})$, and $\mathrm{Y}$ is the
corresponding vector of observed data.

\subsection{Log-likelihood difference equations}
\noindent
Let $\LL(\mu,\bm{\Sigma})$ denote the $\LL$ at the parameters $\mu$ and $\bm{\Sigma}$, and
define $\delta_1^{\LL}(\mu_1,\mu_2,\bm{\Sigma})\coloneq
\LL(\mu_1,\bm{\Sigma})-\LL(\mu_2,\bm{\Sigma})$ and
$\delta_2^{\LL}(\mu,\bm{\Sigma}_1,\bm{\Sigma}_2)\coloneq
\LL(\mu,\bm{\Sigma}_1)-\LL(\mu,\bm{\Sigma}_2)$.
The following
equations are easily derived:
\begin{equation}\label{ch4:eq:de_ll_sigma2}
    \delta^{\LL}_2(\mu,\mathbf{A}\bm{\Sigma},\bm{\Sigma})
        = -\frac{1}{2}\lVert\mathrm{Y}-1_n\otimes\mu\rVert_{\mathbf{U}}^2
        -\frac{n}{2}\log |\mathbf{A}|
\end{equation}
\begin{equation}\label{ch4:eq:de_ll_mu}
    \begin{aligned}
        \delta_1^{\LL}(\mu_2,\mu_1,\bm{\Sigma})
        &= -\frac{1}{2}\lVert 1_n\otimes\Delta\mu\rVert^2_{\mathbf{V}_y^{-1}} \\
        +& (1_n\otimes\Delta\mu)^\top\mathbf{V}_y^{-1}(\mathrm{Y}-1_n\otimes\mu_1)
    \end{aligned}
\end{equation}
where $\mathbf{V}_y=\mathbf{P}_y\otimes\bm{\Sigma}$,
$\mathbf{U}=\mathbf{V}_y^{-1}(\mathbf{I}_n\otimes(\mathbf{A}^{-1}-\mathbf{I}_p))$,
$\mathbf{A}\in\mathbb{R}^{p\times p}$ is full-rank such that
$\mathbf{A}\bm{\Sigma}$ is symmetric positive-definite,
and $\Delta\mu = \mu_2-\mu_1$.

\subsection{Factored energy difference equations}\label{ch4:sm:fede}
\noindent
To derive the corresponding quantities for the FE \eqref{ch4:eq:fe},
we consider the energy and entropy terms separately:
\[ 
    \widetilde{F}_\mathcal{U}(q^*) = 
    \overbrace{\sum_{\mathcal{C}_i\in\mathcal{V}}\E_{\beta_i^*}(\log\psi_i|_\mathrm{Y})}^\text{energy term}
    + \overbrace{\sum_{\mathcal{C}_i\in\mathcal{V}}\ent(\beta_i^*)
    - \hspace{-.8em}\sum_{\{\mathcal{C}_i,\mathcal{C}_j\}\in\mathcal{E}}
    \hspace{-.5em}\ent(\mu_{i,j}^*)}^\text{entropy term}
\] 
At calibration, $\sum_{\mathcal{C}_i\in\mathcal{V}}\E_{\beta_i^*}(\log\psi_i)=
\sum_{v\in V\setminus\{\rho\}}\E_{\beta_v^*}(\log\phi_v)$, where $\beta_v^*$ is the belief of
some cluster that contains node $v$ and its parents, $\phi_v$ is the factor for node $v$
(Def.~\ref{ch4:def:directedgm}), and $\rho$ is the root node.
We work with
$\sum_{v\in V\setminus\{\rho\}}\E_{\beta_v^*}(\log\phi_v)$, which is more convenient for our calculations.

\subsubsection{Varying $\mathbf{\Sigma}$ for a fixed $\mu$}
Under the BM model, we have
$X_v\mid X_{\pa(v)}\sim\mathcal{N}((\bm{\gamma}\otimes\mathbf{I}_p)X_{\pa(v)}, \ell_v\bm{\Sigma})$.
If $v$ is a tree node, $\bm{\gamma}=[1]$
and $\ell_v$ is the length of $v$'s parent edge.
If $v$ is a hybrid, $\bm{\gamma}=[\gamma_1 \ \gamma_2]$
contains the inheritance weights of the parent edges of $v$ and
$\ell_v$ is a weighted-average of their lengths,
weighted by their inheritance weights. Then
\begin{equation}\label{ch4:eq:explogfact}
    \begin{aligned}
        \E_{\beta_v^*}(\log\phi_v) &= -\frac{1}{2}\E_{\beta_v^*}(
        \lVert X_{v,\pa(v)}\rVert_{\mathbf{B}_v\otimes(\ell_v\bm{\Sigma})^{-1}}^2
        \\
        &\hspace{1.1em} + \log |2\pi\ell_v\bm{\Sigma}|) \\
        &\hspace{-2em} = -\frac{1}{2}\left(\lVert\widehat{\mu}_{v,\pa(v)}
        \rVert_{\mathbf{B}_v\otimes(\ell_v\bm{\Sigma})^{-1}}^2 + \log |2\pi\ell_v\bm{\Sigma}|\right. \\
        &\hspace{-.9em} + \left.\tr((\mathbf{B}_v\otimes(\ell_v\bm{\Sigma})^{-1})[\mathbf{K}_{\beta_v^*}^{-1}]_{v,\pa(v)})
        \right)
    \end{aligned}
\end{equation}
where $\mathbf{B}_v=[1,-\hspace{-.2em}1]^\top[1,-\hspace{-.2em}1]$ if $v$ is a tree node,
$\mathbf{B}_v=[1,-\hspace{-.2em}\gamma_1,-\hspace{-.2em}\gamma_2]^\top
[1,-\hspace{-.2em}\gamma_1,-\hspace{-.2em}\gamma_2]$ if $v$ is a hybrid node, and
$X_{v,\pa(v)}$ stacks the states of node $v$ on top of those of its parents.
Upon calibration,
$\widehat{\mu}_{v,\pa(v)}=\E(X_{v,\pa(v)}\mid\mathrm{Y})$ does not depend on
$\mathcal{U}$, or on $\bm{\Sigma}$ \cite[Sec.~SM-4]{teo2024:sm}.
$[\mathbf{K}^{-1}_{\beta_v^*}]_{v,\pa(v)}$
denotes the submatrix of $\mathbf{K}_{\beta_v^*}^{-1}$ corresponding to $X_{v,\pa(v)}$, where
$\mathbf{K}_{\beta_v^*}=\mathbf{J}_{\beta_v^*}\otimes\bm{\Sigma}^{-1}$ is the precision of $\beta_v^*$ and $\mathbf{J}_{\beta_v^*}$
does not depend on $\mu$, $\bm{\Sigma}$, or $\mathrm{Y}$ \cite[Lem.~3]{teo2024:sm}. 
Note that
$\tr((\mathbf{B}_v\otimes(\ell_v\bm{\Sigma})^{-1})[\mathbf{K}_{\beta_v^*}^{-1}]_{v,\pa(v)})$
does not depend on $\bm{\Sigma}$.

We are now ready to determine how the energy varies with the variance rate $\mathbf{\Sigma}$.
Letting $\beta^*(\cdot)$ denote $\beta^*$ as a function of $\mathbf{\Sigma}$:
\begin{equation}\label{ch4:eq:energycomp}
    \begin{aligned}
        & \sum_{\mathcal{C}_i\in\mathcal{V}} (\E_{\beta_i^*(\mathbf{A}\bm{\Sigma})}(\log\psi_i) -
            \E_{\beta_i^*(\bm{\Sigma})}(\log\psi_i)) \\
        =& \hspace{-.8em}\sum_{v\in V\setminus\{\rho\}}\hspace{-.5em}(
            \E_{\beta_v^*(\mathbf{A}\bm{\Sigma})}(\log\phi_v) - \E_{\beta_v^*(\bm{\Sigma})}(\log\phi_v)) \\
        =& \hspace{-.8em}\sum_{v\in V\setminus\{\rho\}}\hspace{-.5em}(-\frac{1}{2}\lVert
            \widehat{\mu}_{v,\pa(v)}\rVert_{\mathbf{U}_v}^2
            -\frac{1}{2}\log|\mathbf{A}|)
    \end{aligned}
\end{equation}
where
$\mathbf{U}_v=(\ell_v^{-1}\mathbf{B}_v\otimes\bm{\Sigma}^{-1})(\mathbf{I}_{n_v}
\otimes(\mathbf{A}^{-1}-\mathbf{I}_p))$
does not depend on $\mathcal{U}$, and $n_v$ is the number of rows or columns of $\mathbf{B}_v$
($n_v=3$ if $v$ is hybrid, 2 otherwise).
When there is absorption of evidence in $\phi_v$, e.g., if the value of $X_v$ or $X_{\pa(v)}$
is observed or fixed as a parameter, the effect on \eqref{ch4:eq:explogfact} is that the
trace term changes but remains independent of $\bm{\Sigma}$.
Thus, \eqref{ch4:eq:energycomp} continues to hold in this case.

We now consider the entropy term, starting with the contribution of a single cluster:
\begin{equation}\label{ch4:eq:entropycomp}
    \begin{gathered}
        \ent(\beta_i^*(\bm{\Sigma})) = (1+\log 2\pi)\frac{m_i p}{2} +
            \frac{1}{2}\log |\mathbf{J}_{\beta_i^*}^{-1}\otimes\bm{\Sigma}| \\
        \ent(\beta_i^*(\mathbf{A}\bm{\Sigma})) - \ent(\beta_i^*(\bm{\Sigma})) =
        \frac{m_i}{2}\log |\mathbf{A}|
    \end{gathered}
\end{equation}
where $m_i=|\mathcal{C}_i|$.
\eqref{ch4:eq:entropycomp} holds for edge beliefs $\mu_{i,j}^*$ as well, but with
$m_{i,j}=|\mathcal{S}_{i,j}|$ and $\mathbf{J}_{\mu_{i,j}^*}$ replacing $m_i$ and
$\mathbf{J}_{\beta_i^*}$.
Summing over the contributions from all clusters and edges:
\begin{equation}
    \begin{aligned}
        & \sum_{\mathcal{C}_i\in\mathcal{V}}(\ent(\beta_i^*(\mathbf{A}\bm{\Sigma}))-
        \ent(\beta_i^*(\bm{\Sigma}))) \\
        &- \hspace{-.8em}\sum_{\{\mathcal{C}_i,\mathcal{C}_j\}\in\mathcal{E}}\hspace{-.8em}
            (\ent(\mu_{i,j}^*(\mathbf{A}\bm{\Sigma}))-\ent(\mu_{i,j}^*(\bm{\Sigma}))) \\
        =& \ \frac{1}{2}(\sum_{\mathcal{C}_i\in\mathcal{V}}m_i -\hspace{-.8em}
        \sum_{\{\mathcal{C}_i,\mathcal{C}_j\}\in\mathcal{E}}\hspace{-.8em}m_{i,j})
        \log |\mathbf{A}| \\
        =& \ \frac{|V|-n-1}{2}\log |\mathbf{A}|
    \end{aligned}
\end{equation}
where the last equality is a consequence of the running-intersection property
(Def.~\ref{ch4:def:clustergraph}, condition~3).
Indeed, for each
non-root internal node in $N$, the clusters that contain that node induce a subtree of
$\mathcal{U}$, so that the node appears in one more cluster than sepset.

Let $\widetilde{F}_\mathcal{U}(\mu,\bm{\Sigma})$ denote the $\FE$ at the parameters $\mu$
and $\bm{\Sigma}$, and define $\delta^{\FE}_2(\mu,\bm{\Sigma}_1,\bm{\Sigma}_2) \coloneq
\widetilde{F}_\mathcal{U}(\mu,\bm{\Sigma}_1)-
\widetilde{F}_\mathcal{U}(\mu,\bm{\Sigma}_2)$.
Then combining the energy and entropy terms gives us:
\begin{equation}\label{ch4:eq:de_fe_sigma2}
    \begin{aligned}
        \delta_2^{\FE}(\mu,\mathbf{A}\bm{\Sigma},\bm{\Sigma}) = \hspace{-.7em}\sum_{v\in V\setminus\{\rho\}}
        \hspace{-.5em}-\frac{1}{2}\lVert\widehat{\mu}_{v,\pa(v)}\rVert_{\mathbf{U}_v}^2 -
        \frac{n}{2}\log |\mathbf{A}|
        \;.
    \end{aligned}
\end{equation}

\subsubsection{Varying $\mu$ for a fixed $\mathbf{\Sigma}$}
Define $\delta_1^{\FE}(\mu_1,\mu_2,\bm{\Sigma})\coloneq\widetilde{F}_\mathcal{U}(\mu_1,
\bm{\Sigma})-\widetilde{F}_\mathcal{U}(\mu_2,\bm{\Sigma})$ and let
$\widehat{\mu}_{v,\pa(v)}(\cdot)$ denote $\widehat{\mu}_{v,\pa(v)}$ as a function of
the root state $\mu$.
From \eqref{ch4:eq:entropycomp}, the entropy term of the FE
does not depend on $\mu$. From the energy term in \eqref{ch4:eq:explogfact},
we get:
\begin{multline}\label{ch4:eq:de_fe_mu}
    \delta_1^{\FE}(\mu_2,\mu_1,\bm{\Sigma}) = 
    -\frac{1}{2}\hspace{-.7em}\sum_{v\in V\setminus\{\rho\}}
    \hspace{-.5em}\left(\lVert\widehat{\mu}_{v,\pa(v)}(\mu_2)\rVert_{\mathbf{B}_v\otimes(\ell_v\bm{\Sigma})^{-1}}^2\right.
    \\
    \left.-\lVert\widehat{\mu}_{v,\pa(v)}(\mu_1)\rVert_{\mathbf{B}_v\otimes(\ell_v\bm{\Sigma})^{-1}}^2\right)
    \; .
\end{multline}

\subsection{Comparing the log-likelihood and factored energy surfaces}
\noindent
The difference equations \eqref{ch4:eq:de_ll_sigma2} and \eqref{ch4:eq:de_fe_sigma2} are equal
for a clique tree since the factored energy of a clique tree is equal to the log-likelihood.
Further, the terms in \eqref{ch4:eq:de_fe_sigma2}
do not depend on the cluster graph $\mathcal{U}$ as long
as calibration is attained. Therefore, equality must hold for any calibrated cluster graph,
that is: $\delta_2^{\LL}=\delta_2^{\FE}$.
By the same reasoning, the difference equations \eqref{ch4:eq:de_ll_mu} and
\eqref{ch4:eq:de_fe_mu} must be equal for any calibrated cluster graph,
and $\delta_1^{\LL}=\delta_1^{\FE}$.

Now letting $\delta(\mu,\bm{\Sigma})=\LL(\mu,\bm{\Sigma}) - \widetilde{F}_\mathcal{U}(\mu,\bm{\Sigma})$,
we get
\begin{align*}
    & \delta(\mu+\Delta\mu,\mathbf{A}\bm{\Sigma}) \\
    =& \ (\LL(\mu,\mathbf{A}\bm{\Sigma}) + \delta^{\FE}_1(\mu+\Delta\mu,\mu,\mathbf{A}\bm{\Sigma}))
    - \widetilde{F}_\mathcal{U}(\mu+\Delta\mu,\mathbf{A}\bm{\Sigma}) \\
    =& \ \LL(\mu,\mathbf{A}\bm{\Sigma}) - \widetilde{F}_\mathcal{U}(\mu,\mathbf{A}\bm{\Sigma}) \\
    =& \ (\LL(\mu,\bm{\Sigma}) + \delta^{\FE}_2(\mu,\mathbf{A}\bm{\Sigma},\bm{\Sigma}))
    - \widetilde{F}_\mathcal{U}(\mu,\mathbf{A}\bm{\Sigma})
    =
    \ \delta(\mu,\bm{\Sigma})
\end{align*}
which implies that $\delta$ is a constant function over $\mu$ and $\bm{\Sigma}$ positive definite.
Thus, the FE is equal to the LL up to an additive constant as claimed
in Theorem~\ref{ch4:thm:surface}, and optimizing either quantity is
theoretically equivalent for this homogeneous BM model.
In practice however, the FE is computed from beliefs that are only approximately
calibrated, and this may
translate to different optimization landscapes for both quantities.

\section{Complexity estimates}\label{ch4:sm:complexity}
\noindent
We evaluate here the cost of BP, LBP, and inverting the covariance for the leaves
to calculate the LL or FE, in terms of number of floating-point operations.

\subsection{Cost of passing a message}
\noindent
From \eqref{ch4:eq:gbp}, a message $\widetilde{\mu}_{i\rightarrow j}$ is obtained by
marginalizing out variables $x_\mathrm{I}$ from the scope $x=\mtov([x_\mathrm{S} \ x_\mathrm{I}])$
of a belief $\beta_i$ with parameters
\begin{equation*}
    \begin{split}
        \mathbf{K} &= \begin{bmatrix}\mathbf{K}_\mathrm{S} & \mathbf{K}_\mathrm{S,I} \\
            \mathbf{K}_\mathrm{S,I}^\top & \mathbf{K}_\mathrm{I}
        \end{bmatrix}, \ h=\begin{bmatrix}h_\mathrm{S} \\ h_\mathrm{I}\end{bmatrix} \mbox{ and } g,
    \end{split}
\end{equation*}
where $\exp(g)$ is the constant of proportionality, that is,
$\beta_i(x)=\exp(-\lVert x\rVert_\mathbf{K}^2/2 + h^\top x + g)$.
Computing $\widetilde{\mu}_{i\rightarrow j}$ requires computing
the message parameters \cite[Alg.~2]{teo2024leveraging}:
\begin{equation*} 
    \begin{split}
        \mathbf{K}_\text{msg} &= \mathbf{K}_\mathrm{S} -
        \mathbf{K}_\mathrm{S,I}\mathbf{K}_\mathrm{I}^{-1}\mathbf{K}_\mathrm{S,I} \\
        h_\text{msg} &= h_\mathrm{S} - \mathbf{K}_\mathrm{S,I}\mathbf{K}_\mathrm{I}^{-1}h_\mathrm{I} \\
        g_\text{msg} &= g + (\log|2\pi\mathbf{K}_\mathrm{I}^{-1}| +
        \lVert h_\mathrm{I}\rVert_{\mathbf{K}_\mathrm{I}^{-1}})/2
        \;.
    \end{split}
\end{equation*}
Computing $\mathbf{K}_\mathrm{I}^{-1}$ from $\mathbf{K}_\mathrm{I}\in\mathbb{R}^{d_\mathrm{I}
\times d_\mathrm{I}}$ using an LU decomposition uses $(5/3)d_\mathrm{I}^3+
\mathcal{O}(d_\mathrm{I}^2)$ flops \cite[Sec.~3.1--3.2]{golub2013:matrix}, where $d_\mathrm{I}\le kp$, $k$ is the maximum cluster
size, and $p$ is the trait dimension.
Given $\mathbf{K}_\mathrm{I}^{-1}$, only the matrix product
$\mathbf{K}_\mathrm{S,I}\mathbf{K}_\mathrm{I}^{-1}\mathbf{K}_\mathrm{S,I}^\top$ in the
computation of $\mathbf{K}_\text{msg}$ may have above-quadratic complexity in $kp$.
For $\mathbf{K}_\mathrm{S,I}\in\mathbb{R}^{d_\mathrm{S}\times d_\mathrm{I}}$, the cost of
sequentially evaluating
$((\mathbf{K}_\mathrm{S,I}\mathbf{K}_\mathrm{I}^{-1})\mathbf{K}_\mathrm{S,I}^\top)$ uses
$2(d_\mathrm{S}d_\mathrm{I}^2+d_\mathrm{S}^2 d_\mathrm{I})$ flops \cite[Tab.~1.1.2]{golub2013:matrix}, which equals $(kp)^3/2$
in the worst case when $d_\mathrm{S}=d_\mathrm{I}=kp/2$.
Thus, the cost of computing a message is at most $(5/3+1/2)(kp)^3\approx 2(kp)^3$ if the lower-order terms are
dropped.
This upper bound applies to BP and LBP regardless of the specific cluster and sepset involved,
though the average cost per message depends on the whole distribution of cluster and sepset
sizes and the message schedule, and is impractical to evaluate.

\subsection{Cost of evaluating the FE}
\noindent
From Appendix~\ref{ch4:sm:fede}, the FE is the sum of an energy term and an entropy term.
Cluster beliefs contribute to the energy and entropy, and edge beliefs only
contribute to the entropy.
Their contributions are independent and their costs add up to the total cost of
evaluating the FE.
For a given cluster or edge, let $\exp(-\lVert x\rVert^2_{\mathbf{K}_0}/2 + h^\top_0 x + g_0)$
be the initial belief before regularization,
$\exp(-\lVert x\rVert^2_{\mathbf{K}}/2 + h^\top x + g)$ be the final belief before the FE is
computed, and $d$ be the dimension of $\mathbf{K}$. 

The energy of a cluster belief is $\exp(-\lVert\mathbf{K}^{-1}h\rVert^2_{\mathbf{K}_0}/2 +
\tr(\mathbf{K}_0\mathbf{K}^{-1}) + h_0^\top(\mathbf{K}^{-1}h)+g_0)$.
Computing $\mathbf{K}^{-1}$ from $\mathbf{K}$ using an LU decomposition uses $(5/3)d^3+\mathcal{O}(d^2)$ flops.
Given $\mathbf{K}^{-1}$, only $\tr(\mathbf{K}_0\mathbf{K}^{-1})$ potentially has
above-quadratic complexity in $d$. However, this term can be computed in $\mathcal{O}(d)$
flops by considering only the diagonal entries of $\mathbf{K}_0\mathbf{K}^{-1}$.
Hence, the cost of evaluating the energy term of the FE is at most $(5/3)v_\text{cg}(kp)^3$,
where $v_\text{cg}$ is the number of clusters in the cluster graph, if the lower-order terms
are dropped.

The entropy of a belief is $-\log(|\mathbf{K}/(2\pi e)|)/2$.
Given an LU decomposition for $\mathbf{K}$, $|\mathbf{K}|$ can be computed in $\mathcal{O}(d)$
flops. Otherwise, computing the decomposition to obtain the determinant uses
$(2/3)d^3+\mathcal{O}(d^2)$ flops.
Hence, assuming that the energy term of the FE has been computed and every cluster belief has
been LU decomposed, the additional cost of evaluating the entropy term of the FE is at most
$(2/3)e_\text{cg}(kp)^3$, where $e_\text{cg}$ is the number of edges of the cluster graph,
if the lower-order terms are dropped.
Thus, the total cost of evaluating the FE is $\mathcal{O}((v_\text{cg}+e_\text{cg})(kp)^3)$.

\onecolumn
\section{Simulation study: supplemental figures}
\setcounter{appendixfig}{0}
\noindent
We provide here more details on the simulations under the most
complex M{\"u}ller network, for a multivariate trait ($p=4$).
\begin{figure*}[h]
    \stepcounter{appendixfig}
    \centering
    \includegraphics[scale=.25]{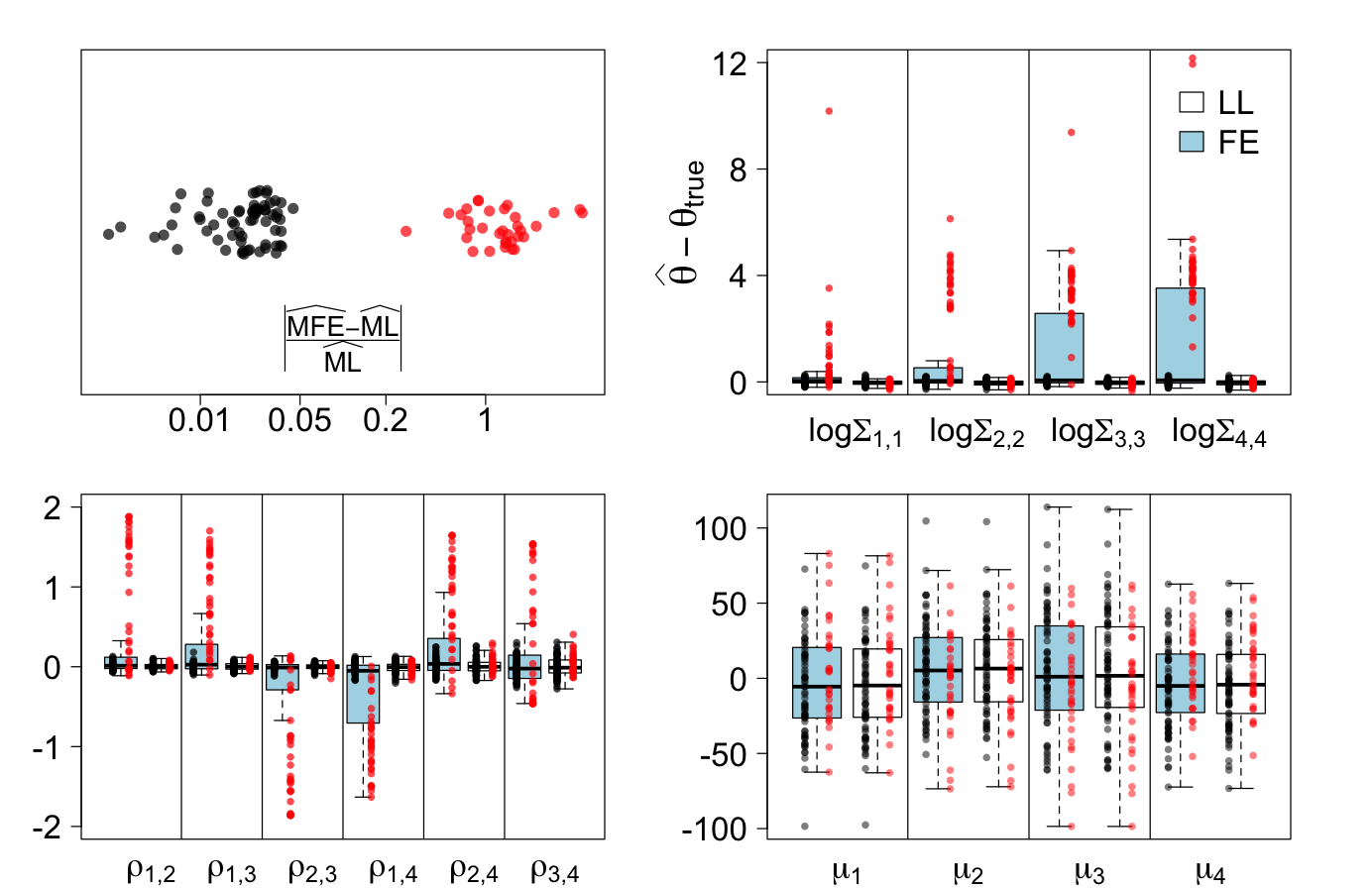}
    \caption[Residuals for MFE and ML estimates for the M{\"u}ller network]{
    Top left: Relative deviation (on the log-scale, horizontal axis) between
    the maximum FE attained ($\widehat{\text{MFE}}$) and
    the theoretical maximum LL ($\widehat{\text{ML}}$)
    for the M{\"u}ller network in the multivariate case,
    using a cluster graph with $k=11$.
    The 100 replicates can be clearly divided into 67 ``good'' datasets (black)
    for which the relative deviation $|({\widehat{\text{MFE}}-\widehat{\text{ML}}})/{\widehat{\text{ML}}}|$
    is below $0.05$, and 33 ``bad'' datasets (red) for which it is large, over $0.27$.
    Top right: estimation error from each optimization objective
    (LL vs FE) for the $p=4$ variance parameters (diagonal terms in $\bm{\Sigma}$),
    on the log-scale.
    Bottom:  estimation error
    for the remaining parameters: 6 correlations $\bm{\rho}$ and
    4 ancestral values $\bf{\mu}$.
    Within each boxplot, the 100 datasets are shown by individual points separated
    into the ``good'' (black) and ``bad'' (red) groups defined in the top-left panel.
    For the variance and correlation parameters,
    estimates based on numerically maximizing the FE are noticeably less precise and
    less accurate when the optimized FE is a poor approximation
    of $\widehat{\text{ML}}$ (red points).
    }\label{ch4fig:opt_muller_mv}
\end{figure*}

\clearpage
\begin{figure}
    \stepcounter{appendixfig}
    \centering
    \includegraphics[]{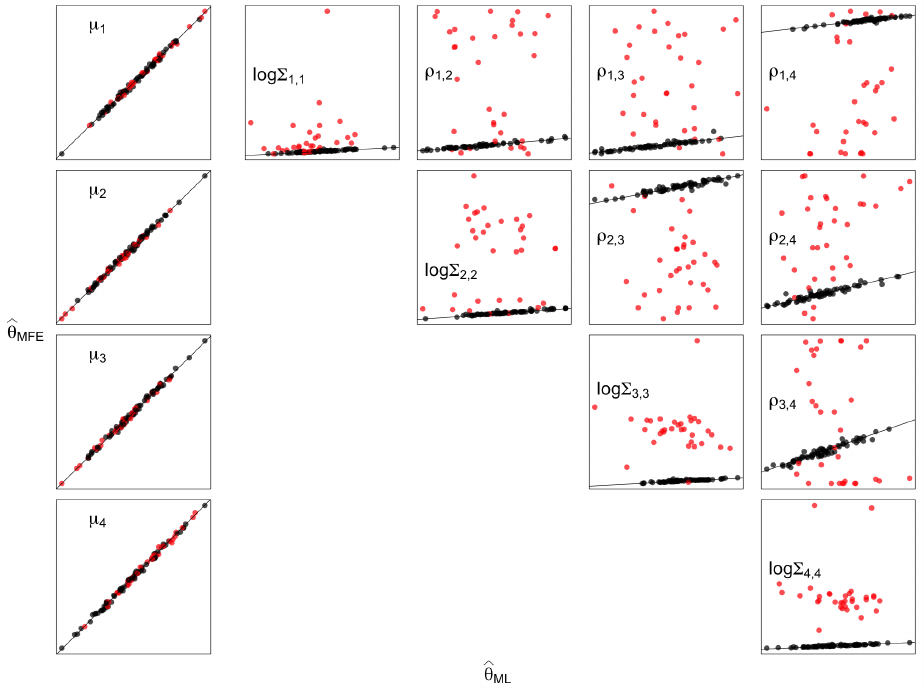}
    \caption[]{Parameter estimates for the M{\"u}ller network
    in the multivariate case using a cluster graph with $k=11$,
    estimated by either numerically maximizing the FE (vertical axis)
    or maximizing the exact LL (horizontal axis).
    Axes are not necessarily on the same scale across different plots
    (axes markings are suppressed for readability) but in each plot,
    the line corresponds to $y=x$ where both estimates are equal.
    Points are colored as in Fig.~\ref{ch4fig:opt_muller_mv}, based on whether
    $|({\widehat{\text{MFE}}-\widehat{\text{ML}}})/{\widehat{\text{ML}}}|$
    is below $0.05$ (black) or above $0.27$ (red).
    The MFE estimates align well with the ML estimates in the former case,
    but not so in the latter case.
    }
    \label{ch4fig:opt_muller_mv2}
\end{figure}

\begin{figure}[!t]
    \stepcounter{appendixfig}
    \centering
    \includegraphics[]{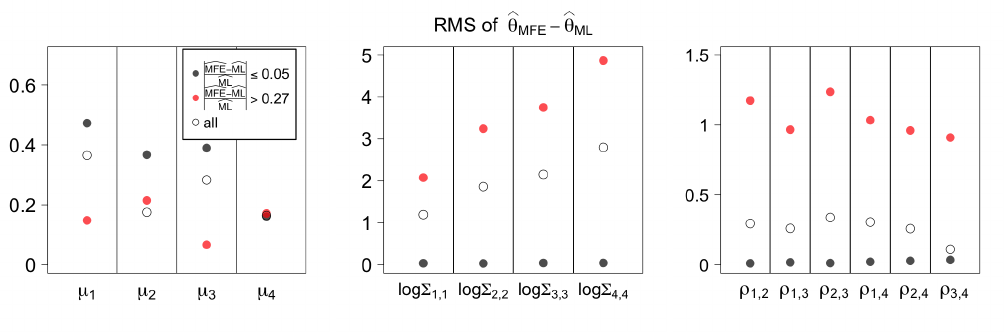}
    \vspace{-1.5em}
    \caption[RMS of the differences between the MFE and the ML estimates for the M{\"u}ller network]{
    Root mean square (RMS) of the differences between the MFE ($\widehat{\theta}_{\text{MFE}}$)
    and ML ($\widehat{\theta}_{\text{ML}}$) estimates for the BM model parameters
    under the M{\"u}ller network in the multivariate case, using a cluster graph with $k=11$.
    $\widehat{\mu}_i$ and $\widehat{\bm{\Sigma}}_{i,i}$ are the estimates of the
    ancestral mean and variance rate for trait $i$.
    $\bm{\rho}_{i,j}$ is the evolutionary correlation between traits $i$ and $j$,
    from $\bm{\Sigma}$, and $\widehat{\bm{\rho}}_{i,j}$ is its estimate.
    For each parameter, the RMS
    of $\widehat{\theta}_{\text{MFE}} - \widehat{\theta}_{\text{ML}}$
    is computed from all 100 replicates, then recomputed for the
    ``good'' subset (67 replicates) and ``bad'' subset (33 replicates),
    for which
    $|({\widehat{\text{MFE}}-\widehat{\text{ML}}})/{\widehat{\text{ML}}}|$
    was below $0.05$ or above $0.27$ respectively, as in Fig.~\ref{ch4fig:opt_muller_mv}.
    Left: for $\mu$, the RMS difference between the two estimators
    is well below their RMSE (RMS between their estimate and the true value),
    which are all above 28. 
    Middle: for the log variances $\log{\bm{\Sigma}}_{i,i}$, the RMSE
    ranges in $[0.09,0.11]$ for the ML estimator, 
    and in $[1.18,2.79]$ for the MFE estimator. 
    The RMS difference between the two estimators is much smaller for the ``good''
    subset than for the ``bad'' subset, showing that the bad subset is driving
    the increase in the RMSE of the FE estimator compared to the exact ML estimator.
    Right: for the correlations ${\rho}_{i,j}$, the bad subset is also driving
    an increase in the RMSE of the FE estimator (ranging in $[0.52,0.73]$) 
    compared to the exact ML estimator (whose RMSEs are in $[0.03,0.13]$). 
    }
    \label{ch4fig:opt_muller_mv_rmse}
\end{figure}

\end{document}